\newtheorem{remark}{Remark}
\newtheorem{theorem}{Theorem}
\newtheorem{lemma}{Lemma}
\newtheorem{corollary}{Corollary}
\newtheorem{proposition}{Proposition}
\title{Bidirectional Integrated Sensing and Communication: Full-Duplex or Half-Duplex?}
\author{

        Zhaolin~Wang,~\IEEEmembership{Graduate Student Member,~IEEE,}
        Xidong~Mu,~\IEEEmembership{Member,~IEEE,} \\
        and Yuanwei~Liu,~\IEEEmembership{Fellow,~IEEE}

\thanks{An earlier version of this paper was presented in part at the IEEE International Conference on Communications, Rome, Italy, June 2023 \cite{wang2022conference}.
        
The authors are with the School of Electronic Engineering
and Computer Science, Queen Mary University of London, London E1 4NS,
U.K. (e-mail: zhaolin.wang@qmul.ac.uk, xidong.mu@qmul.ac.uk, yuanwei.liu@qmul.ac.uk).}
\vspace{-0.8cm}
}
\begin{document}

\maketitle
\begin{abstract}
    A bidirectional integrated sensing and communication (ISAC) system is proposed, in which a pair of transceivers carry out two-way communication and mutual sensing. 
    Both full-duplex and half-duplex operations in narrowband and wideband systems are conceived for the bidirectional ISAC. 1) For the narrowband system, the conventional full-duplex and half-duplex operations are redesigned to take into account sensing echo signals. Then, the transmit beamforming design of both transceivers is proposed for addressing the sensing and communication (S\&C) tradeoff. A one-layer iterative algorithm relying on successive convex approximation (SCA) is proposed to obtain Karush-Kuhn-Tucker (KKT) optimal solutions. 2) For the wideband system, the new full-duplex and half-duplex operations are proposed for the bidirectional ISAC. In particular, the frequency-selective fading channel is tackled by delay pre-compensation and path-based beamforming. By redesigning the proposed SCA-based algorithm, the KKT optimal solutions for path-based beamforming for characterizing the S\&C tradeoff are obtained. Finally, the numerical results show that: i) For both bandwidth scenarios, \emph{full-duplex mode may not always be preferable to half-duplex mode} due to the presence of the sensing interference; and ii) For both duplex operations, it is sufficient to reuse communication signals for sensing in the narrowband system, while an additional dedicated sensing signal is required in the wideband system.
\end{abstract}

\begin{IEEEkeywords}
    Beamforming design, full-duplex, half-duplex, integrated sensing and communication.
\end{IEEEkeywords}
\section{Introduction}
The peak data rate of next-generation wireless networks is expected to be at least one terabit per second \cite{zhang20196g}, which requires very high spectral efficiency. To this end, full-duplex communication, which allows overlap of downlink (DL) and uplink (UL) communications over the same time-frequency resource, is a promising candidate technique in next-generation wireless networks \cite{samsung20206g}. Theoretically, the full-duplex technique is capable of doubling the communication capacity. However, the main obstacle to the full-duplex technique is self-interference (SI) caused by strong power leakage between adjacently deployed transmitter and receiver, which typically overwhelms the desired communication signals \cite{sabharwal2014band}. Thus, SI cancellation techniques are critical for full-duplex techniques. With the advanced SI cancellation techniques in the propagation domain, analog domain, and digital domain \cite{riihonen2011mitigation, bharadia2013full, everett2014passive, zhang2015full, sim2017nonlinear, khaledian2018inherent}, the strong SI can be suppressed to the receiver noise level, which makes the practical implementation of full-duplex communication systems possible. 

Furthermore, the next-generation wireless networks are envisioned not only to enhance the wireless communication performance, but also to provide sensing services to emerging sensing-based applications such as virtual reality, Internet of Vehicles, and Metaverse. Therefore, integrated sensing and communication (ISAC) techniques have been regarded as one of the key enablers of next-generation wireless networks \cite{zhang2020perceptive, liu2022integrated, tong2022environment}. In ISAC, radio-frequency sensing and wireless communication functionalities are carried out simultaneously by sharing the same frequency spectrum and hardware facilities, thus enhancing resource efficiency. Moreover, based on the widely deployed wireless infrastructures, ubiquitous sensing can be realized by exploiting the advanced ISAC techniques \cite{zhang2020perceptive, liu2022integrated, tong2022environment}, which are capable of capturing the environment data for building ubiquitous intelligence and bridging the virtual and physical worlds.

\subsection{Prior Works}

\subsubsection{Studies on Full-duplex Communication}
The existing research contributions of full-duplex communication may be classified into two categories: full-duplex for DL/UL communication \cite{day2012full, taghizadeh2018hardware, da2016spectral, sun2016multi} and full-duplex for relaying \cite{day2012full_relay, ng2012dynamic, suraweera2014low, ngo2014multipair}. Firstly, for the full-duplex DL/UL communication, the authors of \cite{day2012full} conceived a classical bidirectional topology, where a pair of full-duplex terminals communicate with each other. Moreover, an explicit residual SI model under the limited dynamic range was proposed. As a further step, the authors of \cite{taghizadeh2018hardware} extended the full-duplex bidirectional communication system into the wideband orthogonal frequency division multiplexing (OFDM) systems, where the residual SI model in the frequency-domain was derived. By considering the DL/UL communication in the cellular networks, the authors of \cite{da2016spectral} jointly optimized the frequency channel pairing and power allocation to guarantee spectral efficiency and user fairing. Moreover, multi-objective optimization was invoked in \cite{sun2016multi} for jointly minimizing the downlink and uplink transmit power in the full-duplex communication systems. Next, for the full-duplex relay, a classical full-duplex relay topology with a pair of transmitter and receiver was studied in \cite{day2012full_relay}. Furthermore, the authors of \cite{ng2012dynamic} proposed a full-duplex hybrid relay system, where the resource allocation and scheduling are jointly optimized. For maximizing the end-to-end signal-to-noise ratio (SNR) of the full-duplex relay system, several antenna selection schemes were designed to reduce the system complexity. Finally, the authors of \cite{ngo2014multipair} investigated a multipair full-duplex relay system, where the optimal power allocation for maximizing energy efficiency was obtained.

\subsubsection{Studies on ISAC}
The transmit waveform design plays a key role in implementing dual functions in ISAC systems, which has attracted extensive attention \cite{liu2018toward, liu2020joint, pritzker2022transmit, zhang2022holographic}. The authors of \cite{liu2018toward} reused the communication signals for carrying out target sensing and investigated the sensing and communication (S\&C) tradeoff under several design criteria of sensing beampattern. To achieve the full degrees-of-freedom (DoFs) of sensing, the authors of \cite{liu2020joint} proposed to exploit the composite transmit signal, where an additional dedicated sensing signal is added together with the communication signals. It was shown that the composite transmit signal is capable of achieving better sensing beampattern than the communication-only signal. As a further advance, a flexible beamforming approach is proposed in \cite{pritzker2022transmit} for guaranteeing the desired levels of S\&C performance. Furthermore, the authors of \cite{zhang2022holographic} developed a holographic beamforming scheme that employs more densely deployed radiation elements in an antenna array to realize finer controllability of the S\&C beams. However, only considering the waveform design at the transmitter cannot catch the overall sensing performance. As such, some works also considered the sensing performance metrics at the receiver \cite{liu2022joint, liu2021cramer, wang2022stars}. Specifically, the authors of \cite{liu2022joint} jointly optimized the transmit waveform and the filter at the receiver to maximize the signal-to-interference-plus-noise ratio (SINR) of the sensing echo signal. To characterize the parameter estimation accuracy at the receiver, the fundamental Cram{\'e}r-Rao bound (CRB) was exploited as the sensing performance metric in \cite{liu2021cramer} and \cite{wang2022stars}. Most recently, there are growing research contributions to the new modulation techniques for ISAC. For example, the authors of \cite{li2022novel} conceived an ISAC framework based on the orthogonal time frequency space (OTFS) modulation technique. In OTFS, the communication symbols are multiplexed in the delay-Doppler domain, which well matches the parameter estimation in the sensing function. Moreover, the authors of \cite{xiao2022integrated} proposed a novel delay alignment modulation (DAM)-aided ISAC framework to guarantee high Doppler shift tolerance and low peak-to-average-power ratio (PAPR).

\subsection{Motivations and Contributions}

As mentioned above, full-duplex communication has been studied in diverse scenarios, but there is still a paucity of research contributions on integrating the sensing function into full-duplex communication systems. It is well known that compared with half-duplex mode, full-duplex mode can almost double the communication capacity by utilizing advanced SI cancellation techniques. However, we note that in full-duplex communication systems, the SI consists of two components, namely direct-path interference and reflected-path interference \cite{sabharwal2014band}. When integrating the sensing function into the full-duplex communication systems, the reflected-path interference cannot be totally eliminated since it also includes the useful sensing echo signal reflected by the target of interest. In other words, part of the SI, i.e., the sensing echo signal, needs to be preserved to guarantee the sensing performance. In this case, an interesting question arises, \emph{does full-duplex mode with the preserved sensing echo signal still outperform half-duplex mode?} 

To answer this question, we propose to integrate the sensing function into the classical bidirectional communication system \cite{day2012full, sabharwal2014band}, which is referred to as a bidirectional ISAC system. In this system, a pair of dual-functional transceivers carry out two-way communication and mutual sensing. Then, we investigate the full-duplex and half-duplex operations for the bidirectional ISAC system in both narrowband and wideband scenarios. The corresponding transmit beamforming is optimized for characterizing S\&C tradeoff regions achieved by both full-duplex and half-duplex operations, which are compared to provide answers to the raised question. It is suggested that \emph{full-duplex mode may not always outperform half-duplex mode}.

The primary contributions of this paper are as follows:

\begin{itemize}
    \item We propose a bidirectional ISAC system, where a pair of transceivers communicate with each other and sense each other's direction. Then, we design the corresponding full-duplex and half-duplex operation protocols, which are distinguished by whether the communication signals are transmitted and received at the same time or not, for both narrowband and wideband systems. Based on these protocols, we further study the S\&C tradeoff by optimizing the transmit beamforming.
    \item For the narrowband system, we redesign the conventional full-duplex and half-duplex operations for the bidirectional ISAC system. Then, for both operations, we formulate a joint beamforming optimization problem for maximizing the weighted sum of the communication achievable rate and the sensing CRB, which characterizes the S\&C tradeoff. To solve it, we develop a one-layer successive convex approximation (SCA)-based algorithm to obtain Karush-Kuhn-Tucker (KKT) optimal solutions.
    \item For the wideband system, we propose the new full-duplex and half-duplex operation protocols by exploiting different delays in communication and sensing echo signals. Moreover, the DAM technique is used to address frequency-selective communication channels based on delay pre-compensation and path-based beamforming. Furthermore, we redesign the proposed SCA-based algorithm to obtain the KKT optimal solutions to the S\&C tradeoff optimization problem.
    \item Our numerical results reveal that when the line-of-sight (LOS) component of the communication channel is not dominated, full-duplex and half-duplex operations are superior in the communication-prior regime and sensing-prior regime, respectively. When the communication channels become LOS-dominated, half-duplex always outperforms full-duplex due to the strong interference caused by sensing. It is also shown that the proposed wideband system requires an additional dedicated sensing signal for fulfilling the sensing performance, but reusing the communication signal is sufficient in the narrowband system.
\end{itemize}

\subsection{Organization and Notations}

The rest of this paper is organized as follows. Section \ref{sec:narrowband} presents the half-duplex and full-duplex transmission schemes for the narrowband bidirectional ISAC system and the S\&C tradeoff optimization problem. Then, an SCA-based algorithm is proposed to obtain the KKT optimal solutions to this problem. In Section \ref{sec:wideband}, the half-duplex and full-duplex transmission schemes for the wideband bidirectional ISAC system are proposed. The KKT optimal solution to the corresponding S\&C tradeoff optimization problem is obtained by redesigning the SCA-based algorithm. Our numerical results are presented in Section \ref{sec:results} comparing the half-duplex and the full-duplex, which is followed by our conclusions
in Section \ref{sec:conslution}.

\emph{Notations:}
Scalars, vectors, and matrices are denoted by the lower-case, bold-face lower-case, and bold-face upper-case letters, respectively; 
$\mathbb{C}^{N \times M}$ and $\mathbb{R}^{N \times M}$ denotes the space of $N \times M$ complex and real matrices, respectively;
$a^*$ and $|a|$ denote the conjugate and magnitude of scalar $a$;  
$\mathbf{a}^H$ denotes the conjugate transpose of vector $\mathbf{a}$; 
$\mathrm{diag}(\mathbf{a})$ denotes a diagonal matrix with same value as the vector $\mathbf{a}$ on the diagonal;
$\mathbf{A} \succeq 0$ means that matrix $\mathbf{A}$ is positive semidefinite; 
$\mathrm{tr}(\mathbf{A})$ denote the trace of matrix $\mathbf{A}$;
$\mathbb{E}[\cdot]$ denotes the statistical expectation; 
$\mathrm{Re}\{\cdot\}$ denotes the real component of a complex number;
$\mathcal{CN}(\mu, \sigma^2)$ denotes the distribution of a circularly symmetric complex Gaussian random variable with mean $\mu$ and variance $\sigma^2$.

\section{Narrowband Bidirectional ISAC Systems} \label{sec:narrowband}


\begin{figure}[t!]
      \centering
      \includegraphics[width=0.4\textwidth]{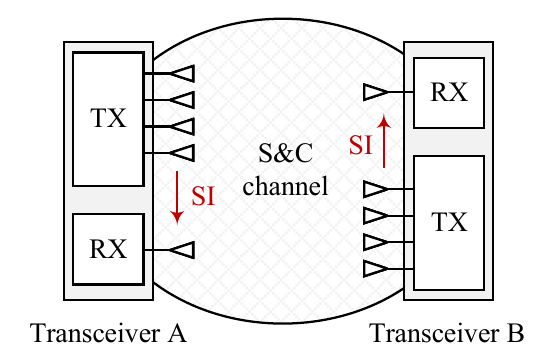}
      \caption{Illustration of the bidirectional ISAC system.}
      \label{fig:system_model}
\end{figure}

We focus our attention on integrating the sensing function into the typical bidirectional communication system \cite{day2012full, sabharwal2014band}, which is referred to as a bidirectional ISAC system. In this system, two dual-functional transceivers, denoted by the set $\mathcal{K} = {A, B}$, have the dual objectives of establishing communication between each other and concurrently sensing each other's direction. Each of these transceivers is equipped with a total of $M$ transmit antennas and a single receive antenna. The bandwidth of this system is denoted by $W$, which corresponds to a symbol period $T_s = 1/W$. In this paper, we consider both narrowband and wideband transmission schemes for the bidirectional ISAC system. The key differences between these two transmission schemes are summarized as follows. On the one hand, in narrowband systems, the communication channel is characterized as frequency-flat, whereas, in wideband systems, the channel exhibits frequency-selective behavior. On the other hand, the delays in both communication and sensing signals can be neglected in the narrowband systems due to the relatively large value of $T_s$. However, this simplification is not applicable in the context of wideband systems. In subsequent sections, we delve into the study of the narrowband bidirectional ISAC system in Section \ref{sec:narrowband}, and we address the more intricate wideband system in Section \ref{sec:wideband}.


\begin{figure}[t!]
  \centering
  \includegraphics[width=0.5\textwidth]{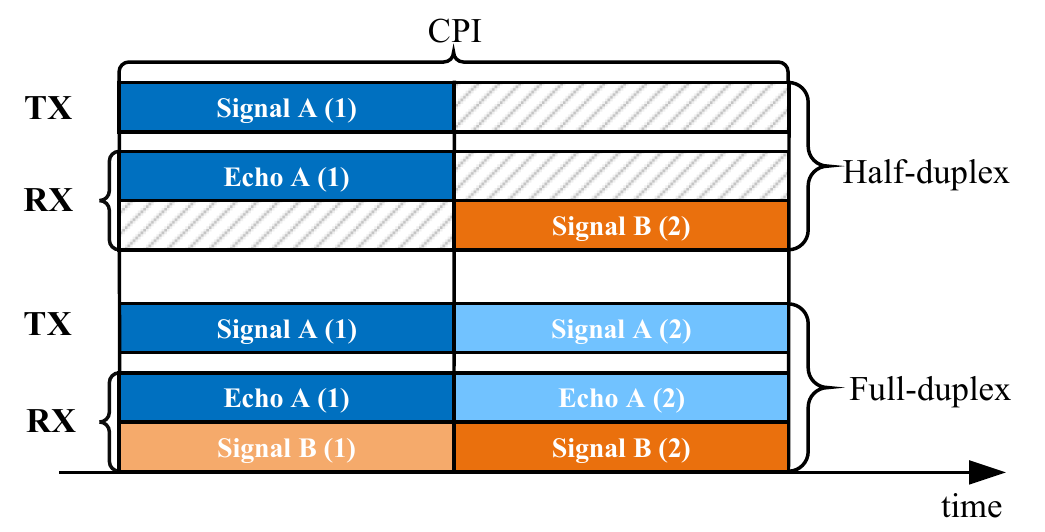}
  \caption{Half-/full-duplex protocols for narrowband system (the view from transceiver $A$).}
  \label{fig:protocols_narrow}
\end{figure}

In the narrowband system,  the integration of a sensing function into the established bidirectional communication system [cite{day2012full, sabharwal2014band}] allows for the direct redesign of corresponding full-duplex and half-duplex operation protocols, as depicted in Fig. \ref{fig:protocols_narrow}. Specifically, we focus on one coherent processing interval (CPI) of length $2 N$,  during which communication channels and target parameters remain approximately constant. For the half-duplex mode, transceiver $A$ transmits the joint S\&C signal to transceiver $B$ in the first half of the CPI. Subsequently, both Transceivers B and A, during the same time period, receive the communication signal and the sensing echo signal, respectively. Following this, transceiver $B$ assumes the role of transmitting the joint S\&C signal to transceiver $A$ during the second half of the CPI, repeating a similar sequence of events. For the full-duplex mode, transceivers $A$ and $B$ transmit the joint S\&C signals simultaneously throughout the entire CPI. Thus, each transceiver receives superimposed communication and sensing echo signals. 

Based on the above operating protocol, we divide the CPI into two time intervals of equal duration, denoted as $\mathcal{P}_1 = \{1,...,N\}$ and $\mathcal{P}_2 = \{N+1,...,2N\}$. Within this context, we denote $\mathbf{x}_{k,i}[n] \in \mathbb{C}^{M \times 1}$ as the normalized joint S\&C signal transmitted by the transceiver $k$ at time $n$ within the interval $\mathcal{P}_i$. The transmit signal covariance matrix is thereby defined as $\mathbf{Q}_{k,i} = \mathbb{E}[\mathbf{x}_{k,i}[n] \mathbf{x}_{k,i}^H[n] ] \in \mathbb{C}^{M \times M}$, with the constraint $\mathbf{Q}_{k,i} \succeq 0$ ensuring that it is a positive semidefinite matrix. We further consider an average power constraint that applies across the entire CPI at each transceiver, and it is expressed as follows:
\begin{equation}
  \sum_{i =1}^2 \mathbb{E} \left[\| \mathbf{x}_{k,i}[n] \|^2 \right] = \sum_{i =1}^2 \mathrm{tr}(\mathbf{Q}_{k,i}) \le 2.
\end{equation}
To facilitate the full-duplex and half-duplex modes, the following constraints must be met in the narrowband system: 
\begin{subequations}
  \begin{align}
    \label{eqn:narrow_mode_1}
    &\text{Full-duplex:} \quad \mathbf{Q}_{A,1} = \mathbf{Q}_{A,2}, \mathbf{Q}_{B,1} = \mathbf{Q}_{B,2}, \\
    \label{eqn:narrow_mode_2}
    &\text{Half-duplex:} \quad \mathbf{Q}_{A,2} = \mathbf{0}_{M \times M}, \mathbf{Q}_{B,1} = \mathbf{0}_{M \times M}.
  \end{align}
\end{subequations}
In the narrowband system, each communication channel experiences frequency-flat fading, allowing for representation by a single filter tap \cite{tse2005fundamentals}. Consequently, we denote $\mathbf{h}_k \in \mathbb{C}^{M \times 1}$ as the complex baseband communication channel vector originating from transceiver $k$ to the other transceiver. Furthermore, $\mathbf{g}_k \in \mathbb{C}^{M \times 1}$ characterizes the complex baseband SI channel vector at transceiver $k$, encompassing both direct-path interference and reflected-path interference generated by objects in the environment other than the target.
The received signal of transceiver $k$ at time $n$ within $\mathcal{P}_i$ can be expressed as
\begin{align} \label{eqn:downlink_signal}
  y_{k,i}[n] = \underbrace{ \sqrt{\rho_c} \mathbf{h}_{k'}^H \mathbf{x}_{k',i}[n]}_{\text{communication signal}} & + \underbrace{\sqrt{\rho_s} \alpha_k e^{j 2 \pi \nu_k n T_s } \mathbf{a}^H(\theta_k) \mathbf{x}_{k,i}[n]}_{\text{sensing echo signal}} \nonumber \\ & + \underbrace{ \sqrt{\eta}\mathbf{g}_k^H \mathbf{x}_{k,i}[n]}_{\text{SI}} + n_{k,i}[n].
\end{align}
Here, $k' = B$ if $k = A$ and $k' = A$ otherwise. Moreover, $\rho_c > 0$ and $\rho_s > 0$ represent the signal-to-noise ratios (SNR) of the communication and sensing echo signals, respectively, while $\eta > 0$ denotes the interference-to-noise ratio (INR) of the SI, determined by antenna separation and analog-domain suppression. Additionally, $n_k[n] \sim \mathcal{CN}(0, 1)$ characterizes the normalized additive white Gaussian noise at the receiver. Within the sensing echo signal, $\alpha_k \in \mathbb{C}$ signifies the complex reflection coefficient following the Swerling-II model \cite{skolnik1980introduction}. Moreover, $\nu_k = 2 v_k / \lambda$ represents the Doppler frequency shift introduced by the relative radial velocity $v_k$, with $\lambda$ denoting the signal wavelength. Furthermore, $\mathbf{a}(\theta_k) \in \mathbb{C}^{M \times 1}$ is the steering vector of the transmitter at angle $\theta_k$. In this paper, we assume that the transmitters are equipped with uniform linear arrays (ULAs). The corresponding steering vector under the far-field planar wave assumption can be defined as:
\begin{align}
  &\mathbf{a}(\theta_k) = \left[1, e^{j\frac{2\pi}{\lambda} d \sin(\theta_k)},...,e^{j\frac{2\pi}{\lambda} (M-1)d \sin(\theta_k) } \right]^T,
\end{align}
where $d$ denotes the antenna spacing.

To ensure the robust performance of the ISAC system, it is imperative to address the adverse effects of SI. This necessitates the availability of precise channel state information (CSI) specific to the SI channel. Although the sensing channel and the SI channel are inherently intertwined, it is still possible to distinguish between them by exploiting their distinct Doppler frequency shifts. However, even with the perfect CSI of the SI channel, the complete elimination of SI may remain unattainable due to the limited dynamic range of the receiver. As a result, akin to the approach presented in \cite{day2012full}, the residual SI, stemming from the finite receiver dynamic range, is modeled as independent Gaussian noise characterized by a mean of zero and a variance proportionally linked to the received signal power. For the sake of clarity, we will focus primarily on the influence of the SI channel on the residual SI, given its much higher power compared to other components. In this case, according to \cite[Eq. (4)]{day2012full}, the residual SI at transceiver $k$ can be characterized by $e_{k,i}[n] \sim \mathcal{CN}(0, \rho_{\text{SI}} \Phi_{k,i})$, where $0 < \rho_{\text{SI}} \ll 1$ denotes the scaling factor that depends on the receiver's dynamic range. Additionally, $\Phi_{k,i} = \eta \mathbf{g}_k^H \mathbf{Q}_{k,i} \mathbf{g}_k$. Subsequently, the received signal by transceiver $k$ at time $n$ within $\mathcal{P}_i$, considering the presence of residual SI, can be expressed as follows:
\begin{align}
  \tilde{y}_{k,i}[n] &= \sqrt{\rho_c} \mathbf{h}_{k'}^H \mathbf{x}_{k',i}[n] \nonumber \\ &+ \sqrt{\rho_s} \alpha_k e^{j 2 \pi \nu_k n T_s } \mathbf{a}^H(\theta_k) \mathbf{x}_{k,i}[n] + e_{k,i}[n] + n_{k,i}[n].
\end{align}

\subsection{Transmission and Reception Schemes}
In this subsection, we present the transmission and reception schemes for the narrowband bidirectional ISAC system. The perfect CSI is assumed at the transceivers. 
Although it has been shown in \cite{liu2018toward} that the communication signal can be reused for target sensing, recent works have proved that the dedicated sensing signal is required in the transmitted joint S\&C signals to achieve the full degrees-of-freedom of the target sensing \cite{liu2020joint, pritzker2022transmit, zhang2022holographic, liu2022integrated}. Therefore, the transmit signal $\mathbf{x}_{k,i}[n]$ can be designed as follows:
\begin{equation}
  \mathbf{x}_{k,i}[n] = \mathbf{w}_{k,i} c_{k,i}[n] + \mathbf{s}_{k,i}[n],
\end{equation}
where $c_{k,i}[n] \sim \mathcal{CN}(0,1)$ denotes the independent unit-power information symbol transmitted by transceiver $k$ to the other, $\mathbf{w}_{k,i} \in \mathbb{C}^{M \times 1}$ denotes the corresponding transmit beamformer, and $\mathbf{s}_{k,i}[n] \in \mathbb{C}^{M \times 1}$ denotes the dedicated sensing signal generated by the pseudo-random coding. Denote the covariance of the dedicated sensing signal by $\mathbf{R}_{k,i} = \mathbb{E}[\mathbf{s}_{k,i}[n] \mathbf{s}^H_{k,i}[n]] \in \mathbb{C}^{M \times M}$, with the constraint $\mathbf{R}_{k,i} \succeq 0$. Then, the covariance matrix of the transmit signal $\mathbf{x}_{k,i}[n]$ can be explicitly written as
\begin{equation}
  \mathbf{Q}_{k,i} = \mathbf{w}_{k,i} \mathbf{w}^H_{k,i} + \mathbf{R}_{k,i} \succeq \mathbf{w}_{k,i} \mathbf{w}_{k,i}^H.
\end{equation} 
Note that since the signal $\mathbf{s}_{k,i}[n]$ is dedicated to sensing and does not contain random communication information, it can be \emph{a-priori} designed and known at the other transceiver prior to transmission \cite{liu2022integrated}. As such, the interference caused by the dedicated sensing signal can be pre-canceled at the receiver. The desired information symbol $c_{k',i} [n]$ transmitted from transceiver $k'$ to transceiver $k$ can be decoded through the following signal:
\begin{align} \label{eqn:comm_signal_narrow}
  \bar{y}_{k,i}[n] = &\tilde{y}_{k,i}[n] - \sqrt{\rho_c} \mathbf{h}_{k'}^H \mathbf{s}_{k',i}[n] \nonumber \\
  = &\sqrt{\rho_c} \mathbf{h}_{k'}^H \mathbf{w}_{k',i} c_{k',i} [n] + \bar{n}_{k,i}[n],
\end{align}
where $\bar{n}_{k,i}[n]$ denotes the remaining interference plus noise as follows: 
\begin{equation}
  \bar{n}_{k,i}[n] = \sqrt{\rho_s} \alpha_k e^{j 2 \pi \nu_k n T_s } \mathbf{a}^H(\theta_k) \mathbf{x}_{k,i}[n] + e_{k,i}[n] + n_{k,i}[n].
\end{equation}  
After decoding the information symbol $c_{k',i} [n]$, we propose to adopt the successive interference cancellation (SIC) technique to remove the interference from the communication signal to the sensing echo By assuming the perfect SIC, the effective sensing signal is given by
\begin{align}
  \psi_{k,i}[n] &= \bar{y}_{k,i}[n] - \sqrt{\rho_c} \mathbf{h}_{k'}^H \mathbf{w}_{k',i} c_{k',i} [n] \nonumber \\ & = \sqrt{\rho_s} \alpha_k e^{j 2 \pi \nu_k n T_s } \mathbf{a}^H(\theta_k) \mathbf{x}_{k,i}[n] + z_{k,i}[n],
\end{align}
where $z_{k,i}[n] = e_{k,i}[n] + n_{k,i}[n]$ denotes the remaining residual SI interference plus noise. Then, the angle $\theta_k$ can be estimated from all the samples $\psi_{k,i}[n]$ across the entire CPI through classical estimation schemes such as maximum likelihood estimation. The estimated $\theta_k$ can be used for some location-based applications and to promote the communication channel estimation in the next CPI.

\subsection{Performance Metrics}

\subsubsection{Achievable Rate of Communication} 
To evaluate the communication performance, we exploit the classical achievable rate as the performance metric, which is determined by the SINR of the received signal at each transceiver. According to the effective communication signal at transceiver $k$ in \eqref{eqn:comm_signal_narrow}, the SINR is given by
\begin{equation} \label{eqn:communication_SINR}
  \gamma_{k,i} = \frac{ \rho_c \left|\mathbf{h}_{k'}^H \mathbf{w}_{k',i} \right|^2}{ \rho_s \mathbf{a}^H(\theta_k) \mathbf{Q}_{k,i} \mathbf{a}(\theta_k) + \rho_{\text{SI}}\Phi_{k,i} + 1 }.
\end{equation}
Therefore, the average achievable communication rate at transceiver $k$ is given by 
\begin{equation} \label{eqn:rate}
  R_k = \frac{1}{2} \sum_{i =1}^2 \log_2 \left( 1 + \gamma_{k,i}\right).
\end{equation} 

\subsubsection{Cram{\'e}r-Rao Bound of Sensing}
While characterizing the Mean Square Error (MSE) of the estimated directions remains a complex task, the closed-form Cramer-Rao Bound (CRB) emerges as a valuable tool, offering a lower bound on the MSE achievable by unbiased estimators \cite{kay1993fundamentals, liu2021cramer}. Consequently, we adopt the CRB as the metric of choice for evaluating the sensing performance. To derive the CRB for estimating the direction $\theta_k$, we begin by consolidating all the samples of the effective sensing signal across the entire CPI into a single vector, denoted as $\boldsymbol{\psi}_k$, which can be represented as follows:
\begin{align}
  \boldsymbol{\psi}_k = &\big[\psi_{k,1}[1],...,\psi_{k,1}[N], \psi_{k,2}[N+1],...,\psi_{k,2}[2N] \big]^T \nonumber \\ = &\sqrt{\rho_s} \alpha_k (\mathbf{I}_{2N} \otimes \mathbf{a}^H(\theta_k)) \mathrm{vec}\left(\mathbf{X}_k \mathrm{diag}( \mathbf{d}(\nu_k) ) \right) + \mathbf{z}_k,
\end{align}
Here, $\mathbf{X}_k \in \mathbb{C}^{M \times 2N}$ represents the composite transmit signal matrix for one CPI, $\mathbf{d}(\nu_k) = [e^{j 2 \pi \nu_k T_s},\dots,e^{j 2 \pi \nu_k 2N T_s}]^T \in \mathbb{C}^{2N \times 1}$ denotes the Doppler frequency shift vector, and $\mathbf{z}_k \in \mathbb{C}^{2N \times 1}$ stands for the interference plus noise vector. Following \cite{kay1993fundamentals}, the expected CRB for estimating $\theta_k$ from $\boldsymbol{\psi}_k$ is derived in Appendix A, yielding the following expression:
\begin{align} \label{eqn:CRB}
  C_k = \frac{1}{2 \rho_s N} \left( \sum_{i =1}^2 \frac{|\alpha_k|^2 \dot{\mathbf{a}}^H(\theta_k) \mathbf{Q}_{k,i} \dot{\mathbf{a}}(\theta_k)}{\rho_{\text{SI}} \Phi_{k,i} + 1} \right)^{-1},
\end{align}
where $\dot{\mathbf{a}}(\theta_k)$ denotes the partial derivation of $\mathbf{a}(\theta_k)$ with respect to $\theta_k$.

\begin{remark} \label{remark_1}
  \emph{
  In light of the expressions for achievable rates in \eqref{eqn:communication_SINR} and \eqref{eqn:rate}, by omitting the interference, it can be demonstrated that the full-duplex mode outperforms the half-duplex mode in terms of communication performance. This assertion is supported by the inequality $2 \log_2(1 + \rho_c^\star) \ge \log_2(1 + 2\rho_c^\star)$, where $\rho_c^\star$ represents the maximum value of $\rho_c |\mathbf{h}_{k'}^H \mathbf{w}_{k',i}|^2$ in the full-duplex mode.
  However, when we consider the CRB, as articulated in \eqref{eqn:CRB}, and exclude interference, the optimal CRB achieved by both duplex mode can be consistently expressed as $(2N\rho_s^\star)^{-1}$, where $\rho_s^\star$ signifies the maximum value of $\rho_s \dot{\mathbf{a}}^H(\theta_k) \mathbf{Q}_{k,i} \dot{\mathbf{a}}(\theta_k)$ in the full-duplex mode. Consequently, it becomes apparent that the full-duplex mode does not yield a significant enhancement in sensing performance, a notion that will be further corroborated by our numerical results.
  } 
\end{remark}

\begin{remark} \label{remark_2}
  \emph{
    In \textbf{Remark \ref{remark_1}}, we conclude that the full-duplex mode can attain superior communication performance compared to the half-duplex mode under the condition of negligible interference. However, it is essential to emphasize that even if the SI is effectively suppressed, the interference introduced by the sensing echo signal, which is critical for ensuring sensing performance, must be retained. Consequently, the sensing echo signal assumes the role of another category of “unsuppressed SI” with respect to communication. This implies that the full-duplex mode may not necessarily outperform the half-duplex mode, particularly when sensing takes precedence over communication.
  }
\end{remark}

\subsection{Joint Beamforming Design for S\&C Tradeoff} \label{sec:beamforming_narrow}

\subsubsection{Problem Formulation}
To achieve optimal performance in both S\&C, our objective is to maximize the average achievable rate denoted as $R = \sum_{k=1}^2 R_k$ across the two transceivers, while simultaneously minimizing the average Cramer-Rao Bound (CRB) denoted as $C = \sum_{k=1}^2 C_k$. Maximizing the rate $R$ necessitates directing a significant portion of the power towards the desired communication direction and suppressing the sensing echo signal. However, this strategy may not be conducive for minimizing the CRB, as improved estimation accuracy typically demands a stronger sensing echo signal. Consequently, the achievable rate $R$ and CRB $C$ represent conflicting performance metrics, giving rise to the S\&C tradeoff.
In this subsection, our objective is to jointly design the communication beamformers and dedicated sensing signals in order to characterize the S\&C tradeoff. The corresponding optimization problem is a multiple-objective optimization problem (MOOP) with two distinct objectives, namely, the rate $R$ and the CRB $C$. To handle this, the scalarization method is employed to transform the MOOP into a single-objective optimization (SOOP) \cite{marler2004survey}. One of the most commonly used scalarization methods is the weighted sum method, which results in the following SOOP:
\begin{subequations} \label{problem:narrow}
  \begin{align}
      \hspace{-1cm} \max_{\mathbf{w}_{k,i}, \mathbf{Q}_{k,i}} \quad & w R - (1-w) \mu C  \\
      \label{constraint:P1_1}
      \mathrm{s.t.} \quad & \eqref{eqn:narrow_mode_1} \text{ or } \eqref{eqn:narrow_mode_2}, \\
      \label{constraint:P1_2}
      & \sum_{i =1}^2 \mathrm{tr}(\mathbf{Q}_{k,i}) \le 2, \forall k, \\
      \label{constraint:P1_3}
      & \mathbf{Q}_{k,i} \succeq \mathbf{w}_{k,i} \mathbf{w}_{k,i}^H, \forall k, i.
  \end{align}
\end{subequations}
Here, the weight parameter $w \in [0,1]$ characterizes the trade-off between communication and sensing performance priorities, while $\mu > 0$ is a scaling parameter that ensures the comparability of the value of $C$ to $R$. Constraint \eqref{constraint:P1_1} determines the duplex mode and constraint \eqref{constraint:P1_2} enforces an average power constraint. It is important to note that the objective function of problem \eqref{problem:narrow} has no physical meaning since the achievable rate and CRB have different units. Nevertheless, maximizing this objective function remains meaningful, as it can facilitate the joint design of the communication beamformers and dedicated sensing signals under different priorities of S\&C performance by adjusting the weight parameter $w$. Furthermore, in problem \eqref{problem:narrow}, the optimization focuses on the communication beamformer $\mathbf{w}_{k,i}$ and the transmit signal covariance matrix $\mathbf{Q}_{k,i},$ indirectly affecting the dedicated sensing signal $\mathbf{s}_{k,i}[n]$ through constraint \eqref{constraint:P1_3}. Upon obtaining the optimal $\mathbf{w}_{k,i}$ and $\mathbf{Q}_{k,i}$, the optimal covariance matrix of the dedicated sensing signal $\mathbf{s}_{k,i}[n]$ can be calculated using $\mathbf{R}_{k,i} = \mathbf{Q}_{k,i} - \mathbf{w}_{k,i} \mathbf{w}_{k,i}^H$. Further, employing matrix decomposition methods, such as eigenvalue decomposition or Cholesky decomposition \cite{liu2020joint}, allows the construction of the signal $\mathbf{s}_{k,i}[n]$.

\subsubsection{SCA-based Algorithm for Solving Problem \eqref{problem:narrow}}
Problem \eqref{problem:narrow} challenging to solve due to the presence of non-convex fractional terms in the achievable rate $R$ and the CRB $C$.  To address this complexity and simplify the optimization, we introduce a set of auxiliary variables $r_{k,i} \ge 0, \forall k, i$, such that they satisfy the constraint:
\begin{equation}
  \gamma_{k,i} \ge r_{k,i},
\end{equation}
Additionally, we introduce auxiliary variables $d_{k,i} > 0$ and $g_{k,i} > 0, \forall k, i$, which satisfy:
\begin{align}
  G_{k,i} &\triangleq |\alpha_k|^2 \dot{\mathbf{a}}^H(\theta_k) \mathbf{Q}_{k,i} \dot{\mathbf{a}}(\theta_k) \ge g_{k,i}^2, \\ D_{k,i} &\triangleq \frac{g_{k,i}^2}{\rho_{\text{SI}} \Phi_{k,i} + 1} \ge d_{k,i}.
\end{align}
With these auxiliary variables, problem \eqref{problem:narrow} can be reformulated into the following equivalent optimization problem:
\begin{subequations} \label{problem:narrow_1}
  \begin{align}
     \hspace{-1cm} \max_{\scriptstyle \mathbf{w}_{k,i}, \mathbf{Q}_{k,i} \atop \scriptstyle r_{k,i}, g_{k,i}, d_{k,i} }  & f(w, r_{k,i}, d_{k,i}) \\
      \label{constraint:non_convex_narrow_1}
      \mathrm{s.t.} \quad & \gamma_{k,i} \ge r_{k,i}, \forall k, i,\\
      \label{constraint:non_convex_narrow_2}
      &G_{k,i} \ge g_{k,i}^2, \forall k, i\\
      \label{constraint:non_convex_narrow_3}
      & D_{k,i} \ge d_{k,i}, \forall k, i,\\
      & \eqref{constraint:P1_1}-\eqref{constraint:P1_3},
  \end{align}
\end{subequations}
where 
\begin{align}
  &f(w, r_{k,i}, d_{k,i}) = \nonumber \\ & \frac{w}{2} \sum_{k =1}^2 \sum_{i =1}^2 \log_2(1 + r_{k,i}) - \frac{(1-w) \mu}{2 \rho_s N} \sum_{k =1}^2 \left(\sum_{i =1}^2 d_{k,i} \right)^{-1}.
\end{align}
In this reformulated problem, the objective function is now convex. Furthermore, utilizing the Schur complement condition, constraint \eqref{constraint:P1_3} can be transformed into an equivalent convex form as follows:
\begin{subequations} 
  \begin{gather}
    \begin{bmatrix} \label{constraint:Schur_1}
      \mathbf{Q}_{k,i} & \mathbf{w}_{k,i} \\
      \mathbf{w}_{k,i}^H & 1
    \end{bmatrix} \succeq 0, \forall k, i.
  \end{gather}
\end{subequations}
This leaves us with the non-convex constraints \eqref{constraint:non_convex_narrow_1} and \eqref{constraint:non_convex_narrow_3} to address. We employ a one-layer iterative algorithm based on SCA to solve these constraints. The key idea is to use the Taylor expansion of the joint convex quadratic-over-linear function $a^2/b$ to obtain lower bounds on the non-convex functions in \eqref{constraint:non_convex_narrow_1} and \eqref{constraint:non_convex_narrow_3}. 
In particular, given a fixed point $(a^{[t]}, b^{[t]})$ obtained in the $t$-th iteration of SCA, a lower bound of $a^2/b$ can be obtained by the first-order Taylor expansion as follows:
\begin{equation} \label{eqn:SCA}
  \frac{a^2}{b} \ge \frac{2 a^{[t]} }{b^{[t]}}a - \left( \frac{a^{[t]} }{b^{[t]}} \right)^2 b.
\end{equation}
Based on this lower bound, we can directly compute a concave lower bound for the variable $D_{k,i}$ by substituting $a = g_{k,i}$ and $b = \rho_{\text{SI}} \Phi_{k,i} + 1$, which is given by:
\begin{align}
  D_{k,i}^{[t]} \triangleq \frac{2 g_{k,i}^{[t]} }{\rho_{\text{SI}} \Phi_{k,i}^{[t]} + 1} g_{k,i} - \left( \frac{ g_{k,i}^{[t]} }{\rho_{\text{SI}} \Phi_{k,i}^{[t]} + 1} \right)^2 (\rho_{\text{SI}} \Phi_{k,i} + 1).
\end{align}
Similarly, by defining $I_{k,i} = \rho_s \mathbf{a}^H(\theta_k) \mathbf{Q}_{k,i} \mathbf{a}(\theta_k) + \rho_{\text{SI}}\Phi_{k,i} + 1$ and using $a = \sqrt{\rho_c} \mathbf{h}_{k'}^H \mathbf{w}_{k',i}$ and $b = I_{k,i}$, we can derive a concave lower bound for the variable $\gamma_{k,i}$ as follows:
\begin{align}
  \gamma_{k,i}^{[t]}  \triangleq  \frac{ 2 \rho_c \mathrm{Re} \left\{ \mathbf{w}_{k',i}^{[n],H} \mathbf{h}_{k'} \mathbf{h}_{k'}^H \mathbf{w}_{k',i} \right\} }{I_{k,i}^{[t]}}  - \rho_c \left| \frac{\mathbf{h}_{k'}^H (\mathbf{w}_{k',i})^{[t]}}{I_{k,i}^{[t]}} \right|^2 I_{k,i}.
\end{align}
With these concave lower bounds $D_{k,i}^{[t]}$ and $\gamma_{k,i}^{[t]}$, problem \eqref{problem:narrow_1} can be reformulated as a convex optimization problem:
\begin{subequations} \label{problem:narrow_3}
  \begin{align}
      \hspace{-1cm} \max_{\scriptstyle \mathbf{w}_{k,i}, \mathbf{Q}_{k,i} \atop \scriptstyle r_{k,i}, g_{k,i}, d_{k,i}  } & f(w, r_{k,i}, d_{k,i})  \\
      \mathrm{s.t.} \quad & \gamma_{k,i}^{[t]} \ge r_{k,i}, \forall k, i,\\
      & D_{k,i}^{[t]} \ge d_{k,i}, \forall k, i,\\      
      & \eqref{constraint:P1_1}, \eqref{constraint:P1_2}, \eqref{constraint:non_convex_narrow_2}, \eqref{constraint:Schur_1}.
  \end{align}
\end{subequations}
This problem can be effectively solved using standard interior-point algorithms. The overall algorithm for solving problem \eqref{problem:narrow} is presented in \textbf{Algorithm \ref{alg:SCA_narrow}}.
\begin{algorithm}[tb]
  \caption{SCA-based algorithm for solving problem \eqref{problem:narrow}.}
  \label{alg:SCA_narrow}
  \begin{algorithmic}[1]
      \STATE{Initialize feasible $\big\{\mathbf{w}_{k,i}^{[0]}, \mathbf{Q}_{k,i}^{[0]}, g_{k,i}^{[0]} \big\}$, and set $t=0$.}
      \REPEAT
      \STATE{update $\big\{\mathbf{w}_{k,i}^{[t]}, \mathbf{Q}_{k,i}^{[t]}, g_{k,i}^{[t]} \big\}$ by solving \eqref{problem:narrow_3}}.
      \STATE{$t = t+1$.}
      \UNTIL{the fractional reduction of the objective value falls below a predefined threshold.}
  \end{algorithmic}
\end{algorithm}

\subsubsection{Convergence, Optimality, and Complexity Analysis} \label{sec:alg_narrow}
We now delve into the convergence, optimality, and complexity aspects of \textbf{Algorithm \ref{alg:SCA_narrow}}. Firstly, it is important to note that the solutions obtained in each iteration of the algorithm remain feasible for subsequent iterations. Consequently, the objective value does not decrease during the iterations. As the objective function is bounded due to the average power constraint, the proposed algorithm is guaranteed to converge. Moreover, the optimality of the solution acquired by the algorithm is substantiated by the following proposition:
\begin{proposition} \label{proposition_1}
  \emph{
  Denote $\big\{ \mathbf{w}_{k,i}^{[\infty]}, \mathbf{Q}_{k,i}^{[\infty]} \big\}$ as the converged solution obtained by \textbf{Algorithm \ref{alg:SCA_narrow}} as $t \rightarrow \infty$. Then, $\big\{ \mathbf{w}_{k,i}^{[\infty]}, \mathbf{Q}_{k,i}^{[\infty]} \big\}$ is a KKT point of problem \eqref{problem:narrow}.
  }
\end{proposition}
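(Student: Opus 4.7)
The plan is to invoke the standard SCA/MM convergence argument: when the surrogate functions constructed by first-order Taylor expansion satisfy the three classical majorization conditions at every iterate, the KKT conditions of the convex surrogate problem \eqref{problem:narrow_3} coincide with those of the original problem \eqref{problem:narrow} at any fixed point of the iteration.

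First I would verify the three surrogate conditions for the lower bounds $\gamma_{k,i}^{[t]}$ and $D_{k,i}^{[t]}$ against the original non-convex functions $\gamma_{k,i}$ and $D_{k,i}$. Concretely, using the joint convexity of $a^2/b$ in $(a,b)$ (with $b>0$) and the elementary inequality \eqref{eqn:SCA}, I would check: (i) \emph{tightness}, i.e., $\gamma_{k,i}^{[t]} = \gamma_{k,i}$ and $D_{k,i}^{[t]} = D_{k,i}$ whenever the current iterate $(\mathbf{w}_{k,i}^{[t]}, \mathbf{Q}_{k,i}^{[t]}, g_{k,i}^{[t]})$ is substituted into the surrogate; (ii) \emph{global lower bound}, i.e., $\gamma_{k,i}^{[t]} \le \gamma_{k,i}$ and $D_{k,i}^{[t]} \le D_{k,i}$ over the feasible set; and (iii) \emph{gradient consistency}, i.e., the partial derivatives of $\gamma_{k,i}^{[t]}$ and $\gamma_{k,i}$ (resp. $D_{k,i}^{[t]}$ and $D_{k,i}$) with respect to all optimization variables agree at the expansion point. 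All three properties follow directly from the fact that \eqref{eqn:SCA} is the first-order Taylor expansion of a jointly convex function.

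Next I would write out the KKT system of the convex surrogate problem \eqref{problem:narrow_3}. Since the problem is convex with affine equality constraints in \eqref{constraint:P1_1}, a linear power budget \eqref{constraint:P1_2}, the linear matrix inequality \eqref{constraint:Schur_1}, and the concave surrogates $\gamma_{k,i}^{[t]}\ge r_{k,i}$, $D_{k,i}^{[t]}\ge d_{k,i}$, $G_{k,i}\ge g_{k,i}^2$, Slater's condition is easily verified (e.g., by scaling down a feasible interior point of \eqref{problem:narrow}), so strong duality and the KKT conditions hold with multipliers for every constraint. Let $\big\{\mathbf{w}_{k,i}^{[t+1]}, \mathbf{Q}_{k,i}^{[t+1]}, r_{k,i}^{[t+1]}, g_{k,i}^{[t+1]}, d_{k,i}^{[t+1]}\big\}$ be a primal optimum of the surrogate at iteration $t$, and let $\{\lambda^{[t]}, \boldsymbol{\Lambda}^{[t]}, \ldots\}$ denote the corresponding dual variables.

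Finally I would take the limit $t\to\infty$. Boundedness of the primal variables follows from the power constraint \eqref{constraint:P1_2}, and boundedness of the dual variables can be obtained from Slater's condition in a standard way, so along a subsequence the primal–dual pair converges. At the fixed point the stationarity equations of the surrogate problem collapse onto those of the original problem: by tightness (i), the surrogate constraints active at the limit are exactly the original constraints, and by gradient consistency (iii), the gradients appearing in the Lagrangian stationarity conditions of \eqref{problem:narrow_3} coincide with those of the Lagrangian of \eqref{problem:narrow}. Hence the limit point satisfies primal feasibility, dual feasibility, complementary slackness, and stationarity for \eqref{problem:narrow}, which is exactly the KKT definition.

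The main obstacle I anticipate is the bookkeeping for the auxiliary variables $r_{k,i}, g_{k,i}, d_{k,i}$: one must be careful to show that the KKT conditions of the \emph{lifted} problem \eqref{problem:narrow_1} are equivalent to those of the original problem \eqref{problem:narrow} after the auxiliary variables are eliminated via the fact that constraints \eqref{constraint:non_convex_narrow_1}–\eqref{constraint:non_convex_narrow_3} are active at optimum (which follows from monotonicity of the objective in $r_{k,i}$ and $d_{k,i}$). Once this equivalence is established, the SCA fixed-point argument above delivers the KKT property for \eqref{problem:narrow} directly.
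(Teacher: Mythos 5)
Your proposal is correct and follows essentially the same route as the paper's Appendix B: establish tightness and gradient consistency of the first-order surrogates $\gamma_{k,i}^{[t]}$ and $D_{k,i}^{[t]}$ at the converged point, write the KKT system of the convex subproblem \eqref{problem:narrow_3}, and observe that these conditions collapse onto the KKT conditions of the lifted problem \eqref{problem:narrow_1} (and hence of \eqref{problem:narrow}) at the fixed point. Your additional care about Slater's condition, dual boundedness, and the elimination of the auxiliary variables $r_{k,i}, g_{k,i}, d_{k,i}$ fills in details the paper leaves implicit, but does not change the argument.
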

\begin{proof}
  Please refer to Appendix B.
\end{proof}

\begin{figure*} [t!]
  \centering
  \includegraphics[width=0.9\textwidth]{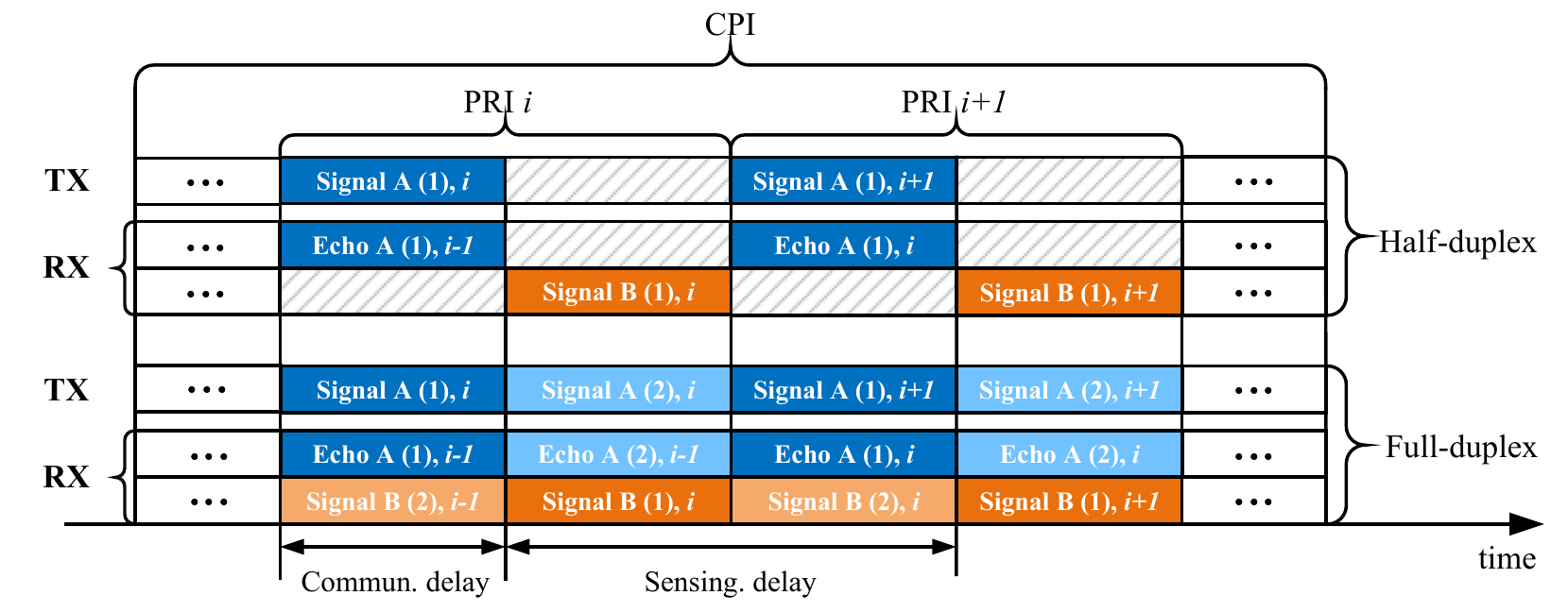}
  \caption{Half-/full-duplex protocols for wideband system (the view from transceiver $A$).}
  \label{fig:protocols_broad}
\end{figure*}

As for the complexity of \textbf{Algorithm \ref{alg:SCA_narrow}}, its worst-case analysis can be presented as follows. The primary source of complexity in the algorithm is associated with solving problem \eqref{problem:narrow_3}. By neglecting the low-dimensional auxiliary variables, problem \eqref{problem:narrow_3} can be viewed as a semidefinite programming (SDP) optimization problem concerning positive semidefinite matrices, as defined in \eqref{constraint:Schur_1}. Given that there are four semidefinite matrices with a dimension $N+1$, the worst-case complexity can be expressed as $\mathcal{O}\left((4^3(N+1)^{3.5} + 4^4)\log(1/\epsilon)\right)$ \cite{huang2010dual}, where $\mathcal{O}(\cdot)$ is the big-O notation and $\epsilon$ denotes the solution accuracy.   
\section{Wideband Bidirectional ISAC Systems} \label{sec:wideband}

In this section, we further study the bidirectional ISAC in the wideband system. Recall that wideband systems present distinct challenges, namely, i) the non-neglected delays in both communication and sensing signals, and ii) the presence of frequency-selective fading channels. To address these challenges, we have introduced novel operational protocols, encompassing both full-duplex and half-duplex modes. These protocols leverage the inherent delays and exhibit a structure akin to their narrowband counterparts. 
Moreover, we incorporate the recently proposed DAM technique \cite{lu2022delay, xiao2022integrated} to address the frequency-selective fading channels. 

The proposed full-duplex and half-duplex operation protocols for the wideband system are illustrated in Fig. \ref{fig:protocols_broad}. In particular, the CPI is divided into a total of $U$ pulse repetition intervals (PRIs). Each PRI has a duration of $N_0 = 2\tau$, where $\tau = \mathrm{round}(\tilde{\tau}/T_s)$ represents the delay of the communication signal, with $\tilde{\tau}$ denoting the delay in second. Owing to the round-trip propagation characteristics, the sensing echo signals exhibit a delay of $2\tau$, which aligns with the duration of one PRI.
For half-duplex mode, the two transceivers jointly transmit S\&C signals during the first half of each PRI, subsequently maintaining a mute state during the latter half to prevent SI. During the first half of each PRI, the transceivers receive the delayed sensing echo signals from the prior PRI. In the second half of the same PRI, they receive the delayed communication signals from the first half of the current PRI.
Conversely, in the full-duplex mode, both transceivers engage in simultaneous transmission of the joint S\&C signals, thereby receiving the superimposed communication and sensing echo signals. It is important to emphasize that, to achieve interference-free transmission in the half-duplex mode, the knowledge of the time delay $\tau$ is required, which can be acquired through matched filters or synchronization information shared between the transceivers.

Similarly, to facilitate the design of full-duplex and half-duplex operation protocols in the wideband system, we partition the $q$-th PRI into two equally-lengthened time intervals denoted as $\mathcal{P}_1^{(q)}= \{(q-1)N_0+1,...,(q - \frac{1}{2})N_0\}$ and $\mathcal{P}_2^{(q)}= \{(q-\frac{1}{2})N_0+1,...,qT_0\}$, respectively. We denote $\mathbf{x}_{k,i}^{(q)}[n] \in \mathbb{C}^{M \times 1}$ as the transmit signal generated by the ISAC transceiver $k$ at time $n$ within $\mathcal{P}_i^{(q)}$. It is assumed that each PRI adopts an identical transmit waveform. The covariance matrix of the transmit signal, denoted as $\mathbf{Q}_{k,i}$, can be formulated as $\mathbf{Q}_{k,i} = \mathbb{E}[ \mathbf{x}_{k,i}^{(q)}[n] (\mathbf{x}_{k,i}^{(q)})^H[n] ]$, where $\mathbf{Q}_{k,i} \succeq 0$.
The full-duplex and half-duplex operations need to satisfy the following constraints, respectively:
\begin{subequations}
    \begin{align}
        \label{eqn:wide_mode_1}
        &\text{Full-duplex:} \quad \mathbf{Q}_{A,1} = \mathbf{Q}_{A,2}, \mathbf{Q}_{B,1} = \mathbf{Q}_{B,2},\\
        \label{eqn:wide_mode_2}
        &\text{Half-duplex:} \quad \mathbf{Q}_{A,2} = \mathbf{0}_{M \times M}, \mathbf{Q}_{B,2} = \mathbf{0}_{M \times M}.
    \end{align}
\end{subequations}
In the wideband system, the delay spread of the communication channels can often be substantially greater than the symbol period, leading to frequency-selective fading. Therefore, it is imperative to represent each communication channel and SI channel by multiple filter taps \cite{tse2005fundamentals}. Let us denote the $l$-th filter tap of the communication channel as $\mathbf{h}_{k, l} \in \mathbb{C}^{M \times 1},$ and similarly, the $l$-th filter tap of the SI channel as $\mathbf{g}_{k, l} \in \mathbb{C}^{M \times 1}$. We posit the existence of a total of $L$ filter taps within the communication channel, and correspondingly, $\tilde{L}$ filter taps within the SI channel. Consequently, the received signal at transceiver $k$ at time $n$ within $\mathcal{P}_m^{(q)}$ can expressed as:
\begin{align}
    y_{k,i}^{(q)}[n] = &\underbrace{\sqrt{\rho_c} u_{k,i}^{(q)}[n]}_{\text{communication signal}} + \underbrace{\sqrt{\rho_s} v_{k,i}^{(q)}[n]}_{\text{sensing echo signal}} \nonumber \\ &+ \underbrace{\sqrt{\eta} \sum_{l=0}^{\tilde{L}-1} \mathbf{g}_{k,l}^H \mathbf{x}_{k,i}^{(q)}[n- \tau_{\text{SI}} - l]}_{\text{SI}} + n_{k,i}^{(q)}[n],
\end{align}
where $\tau_{\text{SI}}$ denote the delay of the SI channel, and the signal $u_{k,i}^{(q)}[n]$ and $v_{k,i}^{(q)}[n]$ are given by, respectively,
\begin{align}
    &u_{k,i}^{(q)}[n] = \begin{cases}
        \sum_{l=0}^{L-1} \mathbf{h}_{k', l}^H \mathbf{x}_{k', 2}^{(q-1)}[n - \tau - l], &\text{if } m = 1, \\
        \sum_{l=0}^{L-1} \mathbf{h}_{k', l}^H \mathbf{x}_{k',1}^{(q)}[n - \tau - l], &\text{if } m = 2,
    \end{cases} \\
    &v_{k,i}^{(q)}[n] = \alpha_k e^{j 2 \pi \nu_k n T_s } \mathbf{a}^H(\theta_k) \mathbf{x}_{k,i}^{(q-1)}[n - 2\tau],
\end{align}
Then, the corresponding SI-canceled signal is given by
\begin{equation}
    \tilde{y}_{k,i}^{(q)} = \sqrt{\rho_c} u_{k,i}^{(q)}[n] + \sqrt{\rho_s} v_{k,i}^{(q)}[n] + e_{k,i}^{(q)}[n] + n_{k,i}^{(q)}[n],
\end{equation}
where $e_{k,i}^{(q)}[n]$ denotes the residual SI caused by limited receiver dynamic range and is modeled as $e_{k,i}^{(q)}[n] \sim \mathcal{CN}(0, \rho_{\text{SI}} \Phi_{k,i})$ with $\Phi_{k,i} = \eta \sum_{l=0}^{\tilde{L}-1} \mathbf{g}_{k,l}^H \mathbf{Q}_{k,i} \mathbf{g}_{k,l}$.

\subsection{Transmission and Reception Schemes}

Note that frequency-selective fading results in inter-symbol interference (ISI). To address this problem, the multi-carrier OFDM technique is typically applied in conventional wireless communication systems. However, the PAPR and inter-carrier interference caused by the Doppler frequency shift of the sensing target can lead to significant performance loss in OFDM-based sensing systems. To avoid these disadvantages, we exploit the recently proposed single-carrier DAM technique to address ISI through delay pre-compensation and path-based beamforming, which is more friendly for sensing \cite{lu2022delay, xiao2022integrated}. Following \cite{lu2022delay}, the DAM signal transmitted at transceiver $k$ at time $n$ within $\mathcal{P}_m^{(q)}$ is given by 
\begin{equation}
    \mathbf{x}_{k,i}^{(q)}[n] = \sum_{l=0}^{L-1} \mathbf{w}_{k,i,l} c_{k,i}^{(q)}[n- \kappa_l] + \mathbf{s}_{k,i}^{(q)}[n],
\end{equation}
where $\kappa_l = L - l$ denotes the delay pre-compensation for the $l$-th channel filter tap, $\mathbf{w}_{k,i,l} \in \mathbb{C}^{M \times  1}$ denotes the beamformer for the $l$-th channel filter tap, and $c_{k,i}^{(q)}[n] \sim \mathcal{CN}(0,1)$ denotes $n$-th unit-power independent information symbol. Therefore, the transmit covariance matrix can be rewritten as 
\begin{equation}
    \mathbf{Q}_{k,i} = \sum_{l=0}^{L-1} \mathbf{w}_{k,i,l} \mathbf{w}_{k,i,l}^H + \mathbf{R}_{k,i} \succeq \sum_{l=0}^{L-1} \mathbf{w}_{k,i,l} \mathbf{w}_{k,i,l}^H.
\end{equation}
Since the dedicated sensing signal $\mathbf{s}_{k,i}^{(q)}[n]$ can be \emph{a-priori} designed, it can be pre-canceled at the receiver. As such, after canceling the interference from the dedicated sensing signal, the communication signal $u_{k,1}^{(q)}[n]$ at transceiver $k$ with DAM becomes
\begin{align}
    &\tilde{u}_{k,1}^{(q)}[n] = \sqrt{\rho_c} \sum_{l=0}^{L-1} \sum_{l'=0}^{L-1} \mathbf{h}_{k',l}^H \mathbf{w}_{k',2,l'} c_{k',2}^{(q-1)}[n - \tau - \kappa_{l'}-l], \nonumber \\
    &\overset{(a)}{=} \sqrt{\rho_c} \left( \sum_{l=0}^{L-1} \mathbf{h}_{k',l}^H \mathbf{w}_{k',2,l} \right) c_{k',2}^{(q-1)}[n - \tau - L] \nonumber \\ &+ \sqrt{\rho_c} \underbrace{\sum_{l=0}^{L-1} \sum_{l'\neq l}^{L-1} \mathbf{h}_{k',l}^H \mathbf{w}_{k',2,l'} c_{k',2}^{(q-1)}[n-\tau-L + l'-l]}_{\text{inter-symbol interference}}.
\end{align}
In step $(a)$, all multi-path components are aligned in the first term with a common delay $\tau+L$ by exploiting the delay pre-compensation $\kappa_l$. Furthermore, the ISI collected in the second term can be suppressed by carefully designing the beamformers. If the zero-forcing (ZF) beamformers are exploited\footnote{To simplify the beamforming design and guarantee that the sensing signal is free from communication interference, we consider the ZF beamformers for eliminating the inter-symbol interference in DAM. Apart from the ZF beamformers, the maximal-ratio transmission (MRT) and minimum mean-square error (MMSE) beamformers can also be adopted \cite{lu2022delay}, which may exhibit better performance but result in high design complexity.}, i.e., $\mathbf{h}_{k,l}^H \mathbf{w}_{k,2,l'} = 0, \forall l \neq l',$ the signal $\tilde{u}_{k,1}^{(q)}[n]$ becomes ISI-free and can be simplified as follows:
\begin{equation}
    \tilde{u}_{k,1}^{(q)}[n] = \sqrt{\rho_c} \left( \sum_{l=0}^{L-1} \mathbf{h}_{k',l}^H \mathbf{w}_{k',2,l} \right) c_{k',2}^{(q-1)}[n-\tau- L].
\end{equation}
The ISI-free version of the signal $u_{k,2}^{(q)}[n]$ can be derived following the same path. Then, after decoding the communication signal, it can be removed from the received signal via SIC, resulting in the following effective sensing signal:
\begin{align}
    &\psi_{k,i}^{(q)}  = \sqrt{\rho_s} v_{k,i}^{(q)}[n] + z_{k,i}^{(q)}[n] \nonumber \\ & = \sqrt{\rho_s} \alpha_k e^{j 2 \pi \nu_k n T_s } \mathbf{a}^H(\theta_k) \mathbf{x}_{k,i}^{(q-1)}[n - 2\tau] + z_{k,i}^{(q)}[n],
\end{align} 
where $z_{k,i}^{(q)}[n] = e_{k,i}^{(q)}[n] + n_{k,i}^{(q)}[n]$.

\subsection{Performance Metrics}

\subsubsection{Achievable Rate of Communication}
The SINR for decoding information symbol from the received signal with DAM at transceiver $k$ at time $n \in \mathcal{P}_m^{(q)}$ is given by 
\begin{equation}
    \gamma_{k,i} = \frac{ \rho_c \left| \sum_{l=0}^{L-1} \mathbf{h}_{k',l}^H \mathbf{w}_{k',i',l} \right|^2}{\rho_s \mathbf{a}^H(\theta_k) \mathbf{Q}_{k,i} \mathbf{a}(\theta_k) + \rho_{\text{SI}} \Phi_{k,i} + 1},
\end{equation}
where $i' = 2$, if $m = 1$; and $i' = 1$, otherwise. The achievable communication rate at transceiver $k$ is thus given by 
\begin{equation}
    R_k = \frac{1}{2} \sum_{i=1}^2 \log_2 (1 + \gamma_{k,i}).
\end{equation}

\subsubsection{Cram{\'e}r-Rao Bound of Sensing}
To derive the CRB for estimating the directions, we first stack all the samples of the effective signals across the entire CPI, i.e., $U$ PRIs, into the vector $\boldsymbol{\psi}_k \in \mathbb{C}^{2N \times 1}$, which can be written as
\begin{equation}
    \boldsymbol{\psi}_k = \sqrt{\rho_s} \alpha_k (\mathbf{I}_{2N} \otimes \mathbf{a}^H(\theta_k)) \mathrm{vec}(\mathbf{X}_k \mathrm{diag}( \mathbf{d}(\nu_k) )) + \mathbf{z}_k,
\end{equation} 
where $\mathbf{X}_k \in \mathbb{C}^{M \times  2N}$ denotes the overall transmit signal matrix and $\mathbf{z}_k \in \mathbb{C}^{2N \times 1}$ denotes the corresponding interference plus noise vector. The expected CRB for estimating $\theta_k$ from $\boldsymbol{\psi}_k$ is derived in Appendix C as follows:
\begin{equation} \label{CRB_wideband}
    C_k = \frac{1}{2 \rho_s N} \left(\sum_{i =1}^2 \frac{ |\alpha_k|^2 \dot{\mathbf{a}}^H(\theta_k) \mathbf{Q}_{k,i} \dot{\mathbf{a}}(\theta_k)}{\rho_{\text{SI}} \Phi_{k,i} + 1} \right)^{-1}
\end{equation}

\subsection{Joint Beamforming Design for S\&C Tradeoff}

\subsubsection{Problem Formulation}
In this subsection, the path-based beamformers and the dedicated sensing signal are jointly optimized to characterize the tradeoff between the communication rate $R = \sum_{k =1}^2 R_k$ and the sensing CRB $C = \sum_{k =1}^2 C_k$. The corresponding scalarized SOOP by the weighted sum method is given by  
\begin{subequations} \label{problem:wide}
    \begin{align}
        \max_{ \mathbf{w}_{k, i, l}, \mathbf{Q}_{k,i} } & w R - (1-w) \mu C  \\
        \label{constraint:P2_1}
        \mathrm{s.t.} \quad & \eqref{eqn:wide_mode_1} \text{ or } \eqref{eqn:wide_mode_2}, \\
        \label{constraint:P2_2}
        & \sum_{m =1}^2 \mathrm{tr}(\mathbf{Q}_{k,i}) \le 2, \forall k, \\
        \label{constraint:P2_3}
        & \mathbf{h}_{k,l}^H \mathbf{w}_{k, i, l'} = 0, \forall l \neq l', \forall k, i,\\
        \label{constraint:P2_4}
        & \mathbf{Q}_{k,i} \succeq \sum_{l=0}^L \mathbf{w}_{k, i, l} \mathbf{w}_{k, i, l}^H, \forall k, i,
    \end{align}
\end{subequations}
where constraint \eqref{constraint:P2_1} determine whether the system works in full-duplex or half-duplex mode, constraint \eqref{constraint:P2_2} is the average power constraint, constraint \eqref{constraint:P2_3} is to eliminate the ISI caused by frequency-selective fading, and \eqref{constraint:P2_4} guarantees the existence of the dedicated sensing signal. In the sequel, we show that problem \eqref{problem:wide} can be solved by exploiting a similar method in Section \ref{sec:beamforming_narrow}.

\subsubsection{SCA-based Algorithm for Solving Problem \eqref{problem:wide}}
Firstly, to tractably recast non-convex constraint \eqref{constraint:P2_4}, we introduce the auxiliary variables $\{\mathbf{Q}_{k,i,l}\}$ and replace it with the following two constraints:
\begin{subequations} 
\begin{align}
    \label{constatint:P2_3_1}
    &\mathbf{Q}_{k,i} = \sum_{l=0}^L \mathbf{Q}_{k,i,l}, \forall k, i, \\
    \label{constatint:P2_3_0}
    &\mathbf{Q}_{k,i,l} \succeq \mathbf{w}_{k, i, l} \mathbf{w}_{k, i, l}^H, \forall k, i, l,
\end{align}
\end{subequations}
The constraint \eqref{constatint:P2_3_0} can be further transformed into the following convex form by exploiting the Schur complement condition:
\begin{subequations} 
    \begin{gather}
    \label{constatint:P2_3_2}
      \begin{bmatrix}
        \mathbf{Q}_{k,i,l} & \mathbf{w}_{k,i,l} \\
        \mathbf{w}_{k,i,l}^H & 1
      \end{bmatrix} \succeq 0, \forall k, i, l.
    \end{gather}
\end{subequations}
Then, following the same path in Section \ref{sec:beamforming_narrow}, problem \eqref{problem:wide} can be transformed into 
\begin{subequations} \label{problem:wide_2}
    \begin{align}
        \max_{\scriptstyle \mathbf{w}_{k, i, l}, \mathbf{Q}_{k,i,l}, \atop \scriptstyle  r_{k,i}, g_{k,i}, d_{k,i} } \quad & f(w, r_{k,i}, d_{k,i}) \\
        \mathrm{s.t.} \quad & \gamma_{k,i}^{[t]} \ge r_{k,i}, \forall k, i,\\
        &G_{k,i} \ge g_{k,i}^2, \forall k, i\\
        & D_{k,i}^{[t]} \ge d_{k,i}, \forall k, i\\    
        & \eqref{constraint:P2_1}-\eqref{constraint:P2_3}, \eqref{constatint:P2_3_1}, \eqref{constatint:P2_3_2}.
    \end{align}
\end{subequations}
Here, we have
\begin{align}
    D_{k,i}^{[t]} & \triangleq \frac{2 g_{k,i}^{[t]} }{\rho_{\text{SI}} \Phi_{k,i}^{[t]} + 1}g_{k,i} - \left( \frac{g_{k,i}^{[t]} }{\rho_{\text{SI}} \Phi_{k,i}^{[t]} + 1} \right)^2 (\rho_{\text{SI}} \Phi_{k,i} + 1), \\
    \gamma_{k,i}^{[t]} & \triangleq \frac{ 2 \rho_c \mathrm{Re} \left\{ \left( \sum_{l=0}^{l-1} \mathbf{w}_{k', i', l}^{[t],H} \mathbf{h}_{k', l} \right) \left( \sum_{l=0}^{l-1} \mathbf{h}_{k', l}^H \mathbf{w}_{k', i', l} \right)  \right\} }{I_{k,i}^{[t]}} \nonumber \\ &- \rho_c \left| \frac{\sum_{l=0}^{l-1} \mathbf{h}_{k', l}^H \mathbf{w}_{k', i', l}^{[t]}}{I_{k,i}^{[t]}} \right|^2 I_{k,i},
\end{align}
where $I_{k,i} = \rho_s \mathbf{a}^H(\theta_k) \mathbf{Q}_{k,i} \mathbf{a}(\theta_k) + \rho_{\text{SI}} \Phi_{k,i} + 1$. Problem \eqref{problem:wide_2} is convex and can be solved by the standard interior-point algorithm. The overall SCA-based algorithm for solving problem \eqref{problem:wide} is summarized in \textbf{Algorithm \ref{alg:SCA_wide}}.

\subsubsection{Convergence, Optimality, and Complexity Analysis} Following the analogous approach as presented in Section \ref{sec:alg_narrow}, we can readily establish both the convergence and the KKT optimality properties of \textbf{Algorithm \ref{alg:SCA_wide}}. 
For the sake of conciseness, we shall refrain from presenting the detailed proof here. Furthermore, we turn our attention to analyzing the worst-case computational complexity of \textbf{Algorithm \ref{alg:SCA_wide}}. Given that the problem \eqref{problem:wide_2} can be construed as a SDP problem, focusing on the semidefinite matrices as defined in \eqref{constatint:P2_3_2},
the worst-case computational complexity of \textbf{Algorithm \ref{alg:SCA_wide}} is given by $\mathcal{O} \left((4L)^3(N+1)^{3.5} + (4L)^4 \log(1/\epsilon)\right)$.

\begin{algorithm}[tb]
    \caption{SCA-based algorithm for solving problem \eqref{problem:wide}.}
    \label{alg:SCA_wide}
    \begin{algorithmic}[1]
        \STATE{Initialize feasible $\big\{\mathbf{w}_{k,i,l}^{[0]}, \mathbf{Q}_{k,i,l}^{[0]}, g_{k,i}^{[0]} \big\}$, and set $t=0$.}
        \REPEAT
        \STATE{update $\big\{\mathbf{w}_{k,i,l}^{[t]}, \mathbf{Q}_{k,i,l}^{[t]}, g_{k,i}^{[t]} \big\}$ by solving \eqref{problem:wide_2}}.
        \STATE{$t = t+1$.}
        \UNTIL{the fractional reduction of the objective value falls below a predefined threshold.}
    \end{algorithmic}
\end{algorithm}

\section{Numerical Results} \label{sec:results}
In this section, the numerical results obtained through Monte Carlo simulations are provided to compare the performance of full-duplex and half-duplex operations in both narrowband and wideband bidirectional ISAC systems. The adopted system parameters are presented in Table \ref{table:parameters}. All the convex problems in the paper are solved by the CVX toolbox \cite{cvx}. The scaling parameter is set to $\mu = 1.5 \times 10^4$. The initial points of the algorithms are set as random complex Gaussian variables that satisfy the corresponding constraints. Finally, all the simulation results are obtained by averaging over $100$ channel realizations unless otherwise specified.
\begin{table*}[tbp]
    \caption{System Parameters}
    \begin{center}
    \centering
    \resizebox{\textwidth}{!}{
        \begin{tabular}{|l|l|l||l|l|l|}
            \hline
            \centering
            $\Delta$ & Coherence time &$1$ ms & $N$ &Number of antennas at each ISAC transceiver &$10$\\
            \hline
            \centering
            $\rho_c$ &  SNR of the communication signal  & $15$ dB & $\rho_s$&  SNR of the sensing echo signal & $7$ dB\\
            \hline
            \centering
            $\rho_{\text{SI}}$& Limited dynamic range factor of the receiver  & $-80$ dB &$\eta$  &INR of the SI signal &$50$ dB \\
            \hline
            \centering
            $D$ & Distance between ISAC transceivers  & $300$ m   & $\theta_A$, $\theta_B$ &Angles of the ISAC transceivers &$0^{\circ}$\\
            \hline
        \end{tabular}
    }
    \end{center}
    \label{table:parameters}
\end{table*}

\subsection{Narrowband System}

For the narrowband bidirectional ISAC system, we set the bandwidth to $W = 100$ KHz. The flat-fading communication channels are assumed to obey the Rician fading, i.e.,
\begin{equation}
    \mathbf{h}_k = \sqrt{ \frac{\beta}{\beta + 1} } \mathbf{a}(\theta_k) + \sqrt{\frac{1}{\beta + 1}} \mathbf{h}_k^w, \forall k ,
\end{equation}
where $\beta$ is the Rician factor that characterizing the strength of the LOS component and $\mathbf{h}_k^w \in \mathbb{C}^{N \times 1}$ denotes the non-line-of-sight (NLOS) component that obeys Rayleigh fading, i.e., $\mathbf{h}_k^w \sim \mathcal{CN}(\mathbf{0}, \mathbf{I}_N)$. The SI channel $\mathbf{g}_k$ at each transceiver is assumed to obey Rayleigh fading, i.e., $\mathbf{g}_k \sim \mathcal{CN}(\mathbf{0}, \mathbf{I}_N)$. Moreover, according to the parameters in Table \ref{table:parameters}, the length of one CPI in the narrowband system is given by $2T = \Delta \times W = 100$.

\subsubsection{Convergence of Algorithm \ref{alg:SCA_narrow}}
In Fig. \ref{fig:convergence_narrow}, we present a demonstration of the convergence characteristics of \textbf{Algorithm \ref{alg:SCA_narrow}} when $w = 0.5$ and $\beta = 0$ dB. As illustrated in Fig. \ref{fig:convergence_narrow}, we observe that, in all instances, the objective value achieved by the proposed algorithm converges swiftly to a stable value within just a few iterations. This observation provides strong evidence in favor of the efficiency and low computational complexity of the proposed algorithm.

\subsubsection{S\&C Tradeoff}
In Fig. \ref{fig:tradeoff_narrow}, we delve into an exploration of the tradeoff between S\&C, while considering a range of Rician factors. It is notable and intriguing to observe that, unlike conventional communication-only systems, in the context of ISAC systems, \emph{full-duplex mode may not necessarily outperform half-duplex mode, even when SI has been effectively mitigated (indicated by $\rho_{\text{SI}} = -80$ dB).} 
This phenomenon is distinct from the outcomes typically encountered in communication-only scenarios. When the Rician factor is relatively small, such as in cases with $\beta=0$ dB and $\beta=6$ dB, full-duplex mode demonstrates superior performance in the communication-prior regime. However, in the sensing-prior regime, the half-duplex mode becomes the more favorable choice.
This behavior can be explained by several factors. Firstly, in the communication-prior regime, the sensing echo signals do not necessitate significant strength and can be effectively suppressed through beamforming. In this context, the interference incurred by sensing echo signals on the communication signals in full-duplex mode is minimal, leading to enhanced communication performance, as elucidated in \textbf{Remark \ref{remark_1}}. In the sensing-prior regime, guaranteeing the quality of sensing performance requires strong sensing echo signals, which, in turn, become potent sources of interference to the communication signals within the full-duplex mode. Therefore, communication performance in full-duplex mode is significantly reduced. These outcomes align with the discussions in \textbf{Remark \ref{remark_2}}.

\begin{figure}[t!]
    \centering
    \begin{subfigure}[t]{0.4\textwidth}
        \centering
        \includegraphics[width=1\textwidth]{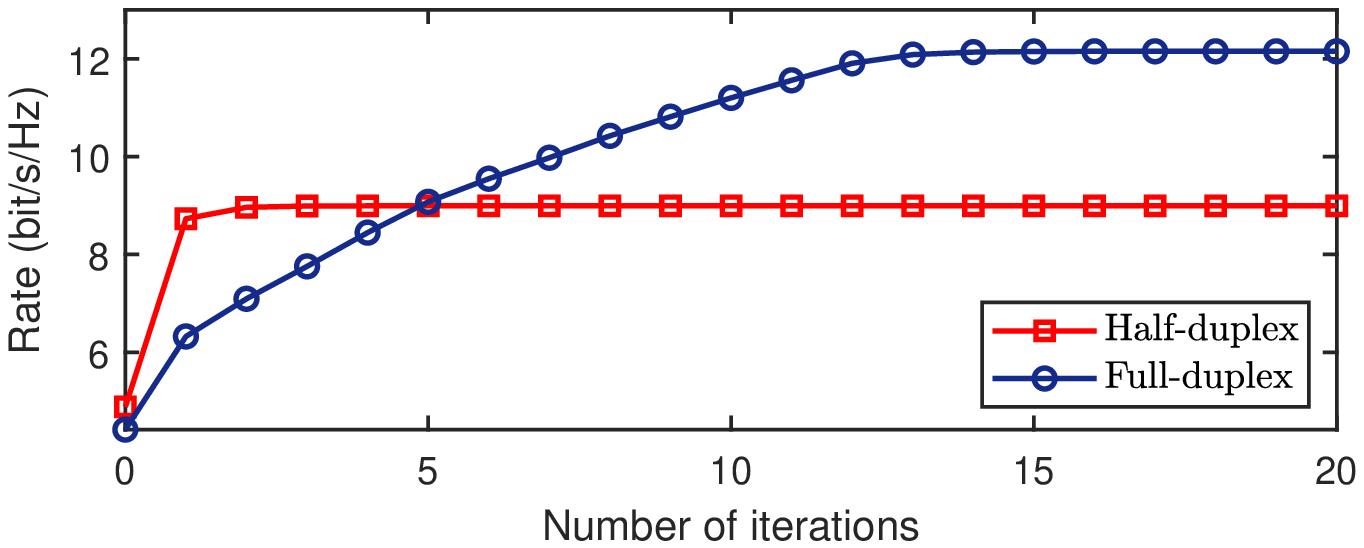}
    \end{subfigure}
    \begin{subfigure}[t]{0.4\textwidth}
        \centering
        \includegraphics[width=1\textwidth]{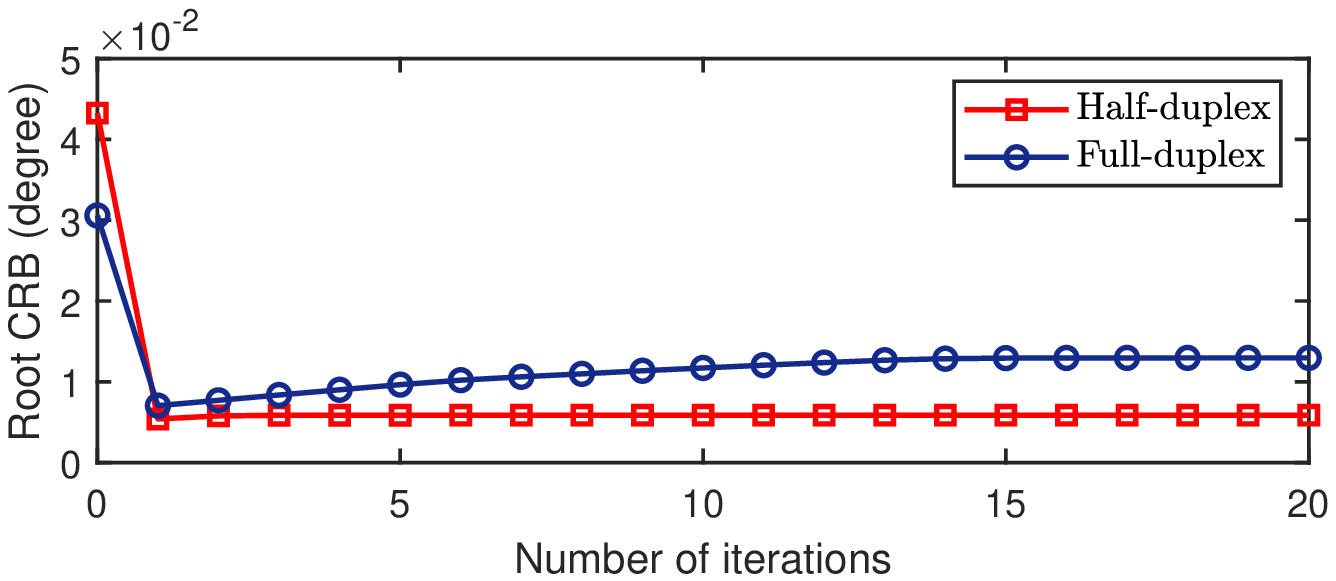}
    \end{subfigure}
    \caption{Convergence behavior of Algorithm \ref{alg:SCA_narrow} when $w = 0.5$. }
    \label{fig:convergence_narrow}
\end{figure}

\begin{figure}[t!]
    \centering
    \includegraphics[width=0.4\textwidth]{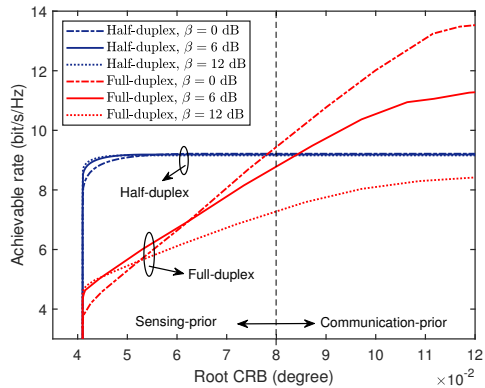}
    \caption{S\&C Tradeoff region achieved by full-duplex and half-duplex operations in the narrowband system.}
    \label{fig:tradeoff_narrow}
\end{figure}

It is also discernible from Fig. \ref{fig:tradeoff_narrow} that the tradeoff region achieved by the full-duplex mode tends to contract as the Rician factor increases, whereas the tradeoff region obtained through the half-duplex mode experiences a slight expansion. This phenomenon can be attributed to the fact that a higher Rician factor implies a stronger correlation between the communication and sensing channels. In full-duplex mode, a larger Rician factor results in a scenario where the objective of maximizing the power of the communication signal and minimizing the power of the sensing echo signal becomes increasingly contradictory. Consequently, the tradeoff region diminishes. In half-duplex mode, there is no interference stemming from the sensing echo signal affecting the communication signal. Therefore, for large Rician factors, maximizing the power of the sensing echo signal at one ISAC transceiver aligns with the objective of maximizing the power of the communication signal at the other ISAC transceiver, leading to a broader tradeoff region. In the following, we delve further into additional numerical results to thoroughly examine the influence of the Rician factor on the S\&C performance.

\begin{figure}[t!]
    \centering
    \includegraphics[width=0.4\textwidth]{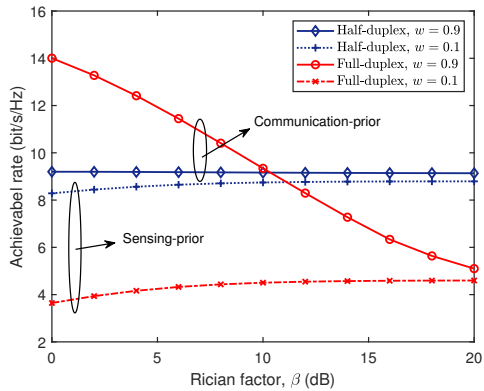}
    \caption{Achievable Rate versus Rician Factor $\beta$.}
    \label{fig:rician_rate_narrow}
\end{figure}

\begin{figure}[t!]
    \centering
    \includegraphics[width=0.4\textwidth]{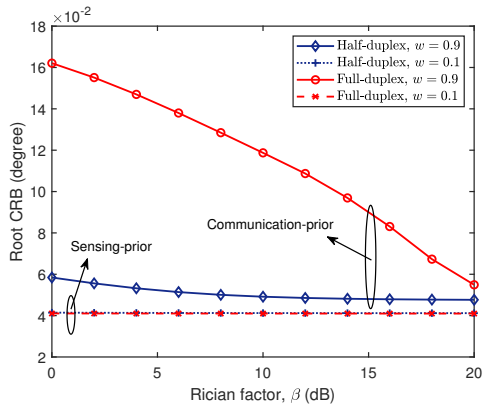}
    \caption{Root CRB versus Rician Factor $\beta$.}
    \label{fig:rician_crb_narrow}
\end{figure}

\subsubsection{Imapct of Rician Factor}
Fig. \ref{fig:rician_rate_narrow} provides insight into the influence of the Rician factor on the achievable rates in both the communication-prior regime ($w = 0.9$) and the sensing-prior regime ($w = 0.1$). It is evident that in the sensing-prior regime, the half-duplex mode consistently outperforms the full-duplex mode in terms of achievable rates, irrespective of the Rician factor. This is primarily attributed to the absence of interference from the sensing echo signal to the communication signal. Furthermore, in the sensing-prior regime, the achievable rate experiences a slight increase as the Rician factor grows, owing to the higher correlation between the communication and sensing channels. Conversely, in the communication-prior regime, due to the interference stemming from the sensing echo signal, the performance advantage in achievable rates achieved by the full-duplex mode gradually diminishes, eventually falling below that of the half-duplex mode as the Rician factor increases. Nevertheless, the performance of the half-duplex mode remains relatively unaffected by variations in the Rician factor.

In Fig. \ref{fig:rician_crb_narrow}, we examine the impact of the Rician factor on the root CRB. In the sensing-prior regime, the root CRB achieved by both duplex modes remains stable and unaltered. We also note that the performance gap between the two protocols is negligible, aligning with the analysis presented in \textbf{Remark \ref{remark_1}}. However, in the communication-prior regime, the half-duplex mode emerges as the more favorable choice, and a noticeable performance gap between the two protocols becomes evident. This is due to the necessity of suppressing the sensing echo signal to ensure the communication performance within the full-duplex mode. Furthermore, this performance gap gradually diminishes as the Rician factor increases.

\begin{figure}[t!]
    \centering
    \begin{subfigure}[t]{0.4\textwidth}
        \centering
        \includegraphics[width=1\textwidth]{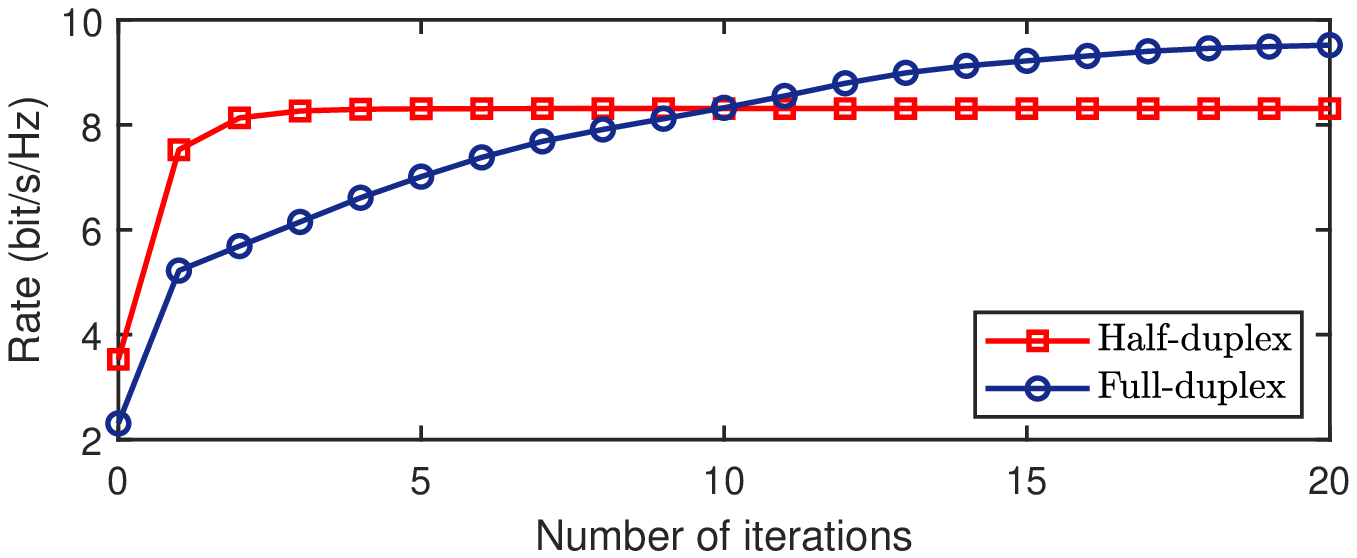}
    \end{subfigure}
    \begin{subfigure}[t]{0.4\textwidth}
        \centering
        \includegraphics[width=1\textwidth]{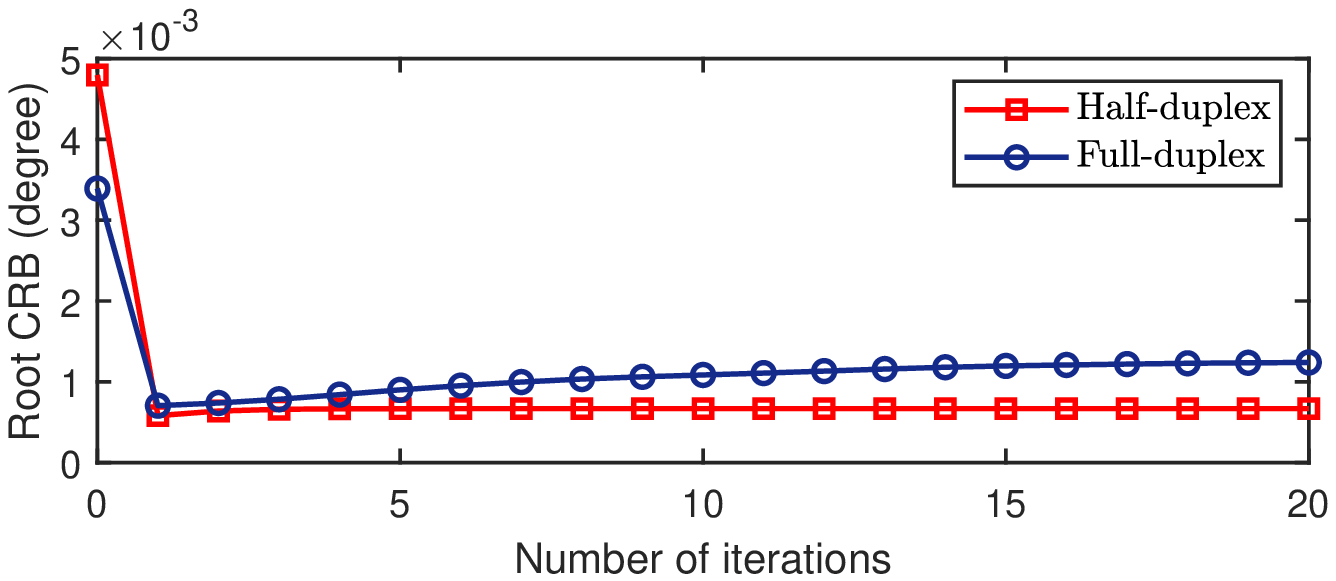}
    \end{subfigure}
    \caption{Convergence behavior of Algorithm \ref{alg:SCA_wide} when $w = 0.5$. }
    \label{fig:convergence_braod}
\end{figure}

\begin{figure}[t!]
    \centering
    \includegraphics[width=0.4\textwidth]{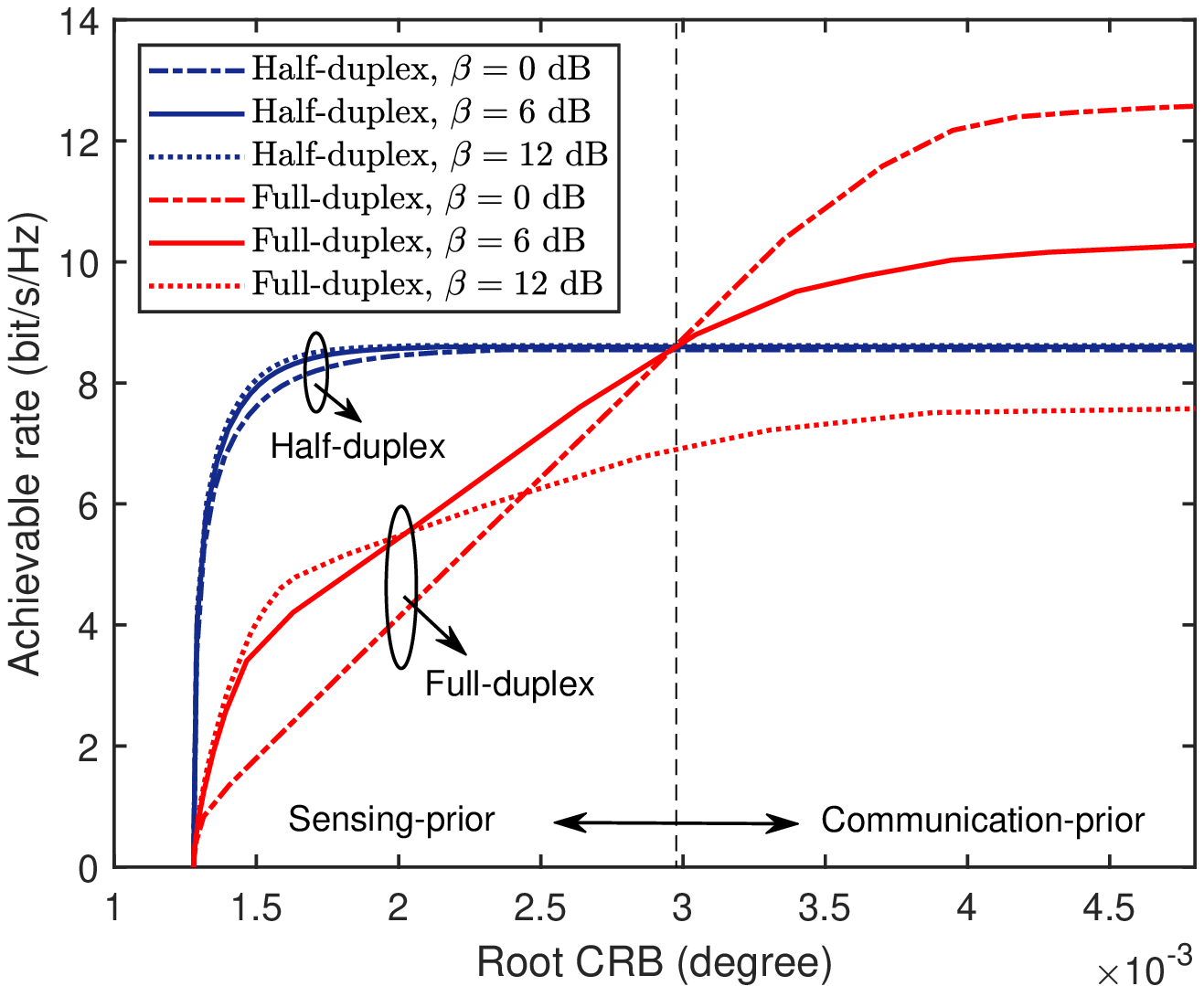}
    \caption{S\&C Tradeoff region achieved by full-duplex and half-duplex operations in the wideband system.}
    \label{fig:tradeoff_broad}
\end{figure}

\begin{figure*}[t!]
    \centering
    \begin{subfigure}[t]{0.4\textwidth}
        \centering
        \includegraphics[width=1\textwidth]{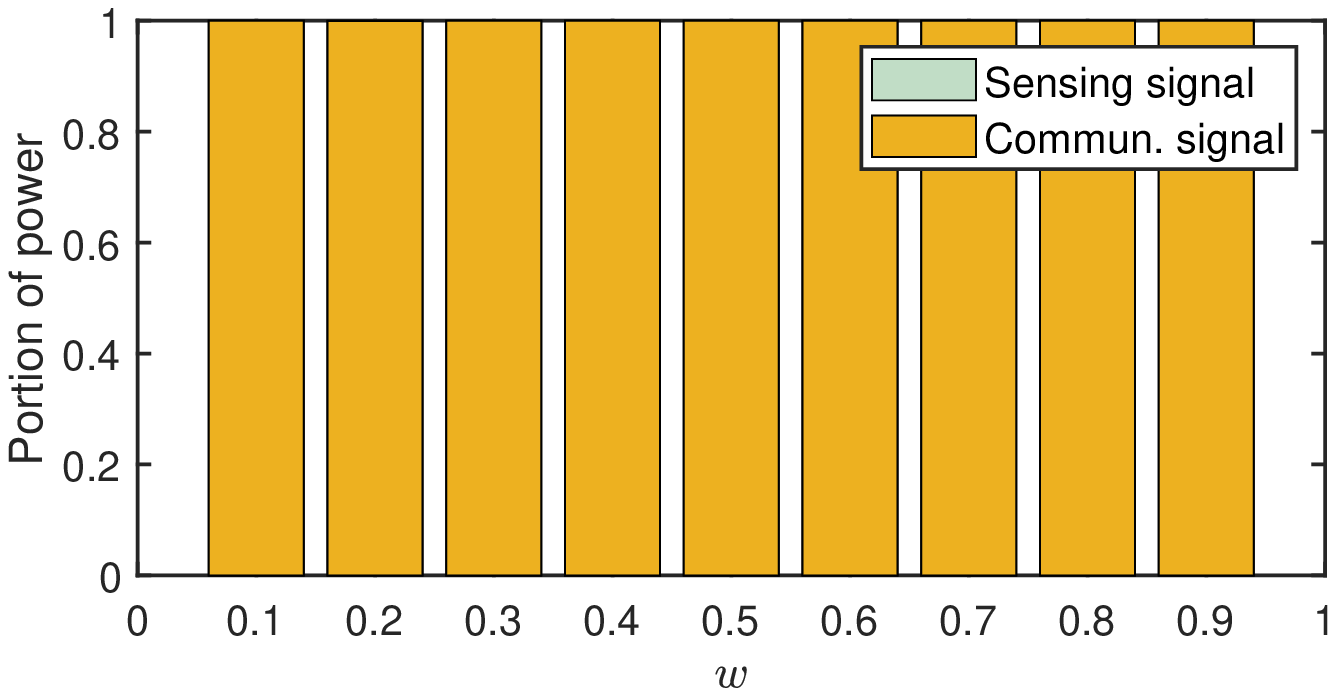}
        \caption{Narrowband, Half-duplex}
    \end{subfigure}
    \begin{subfigure}[t]{0.4\textwidth}
        \centering
        \includegraphics[width=1\textwidth]{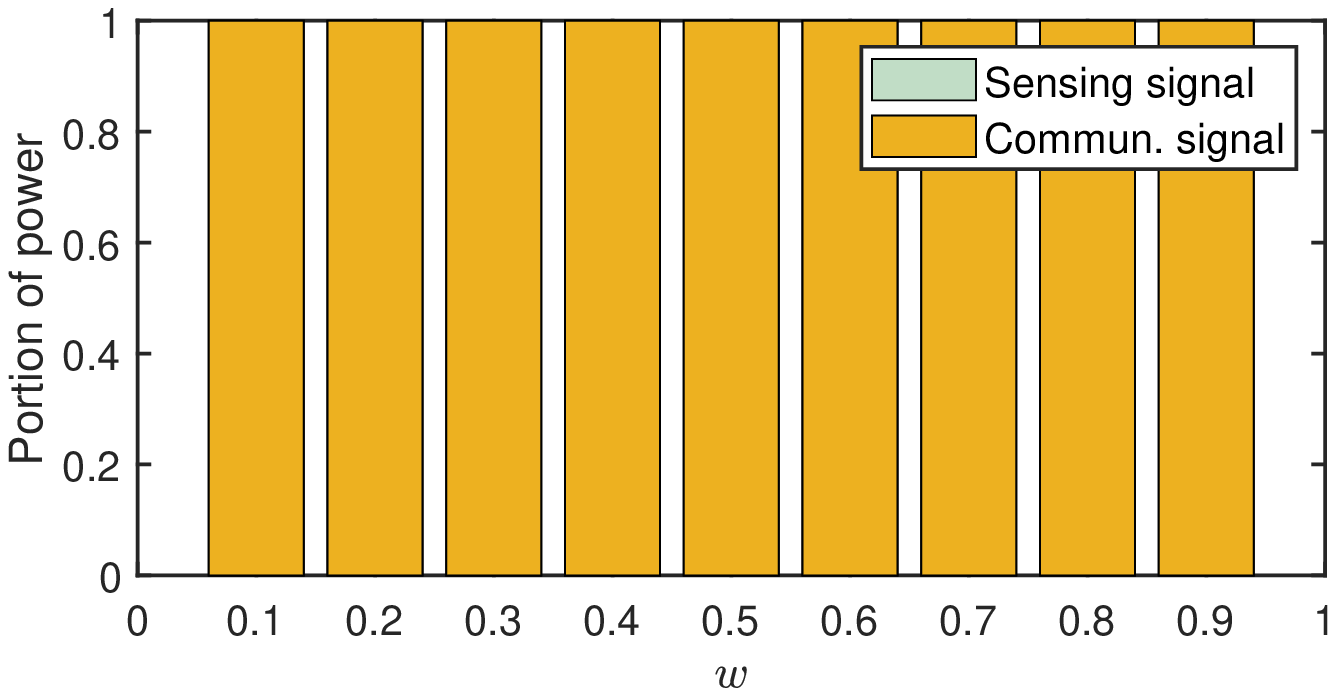}
        \caption{Narrowband, Full-duplex}
    \end{subfigure}
    \begin{subfigure}[t]{0.4\textwidth}
        \centering
        \includegraphics[width=1\textwidth]{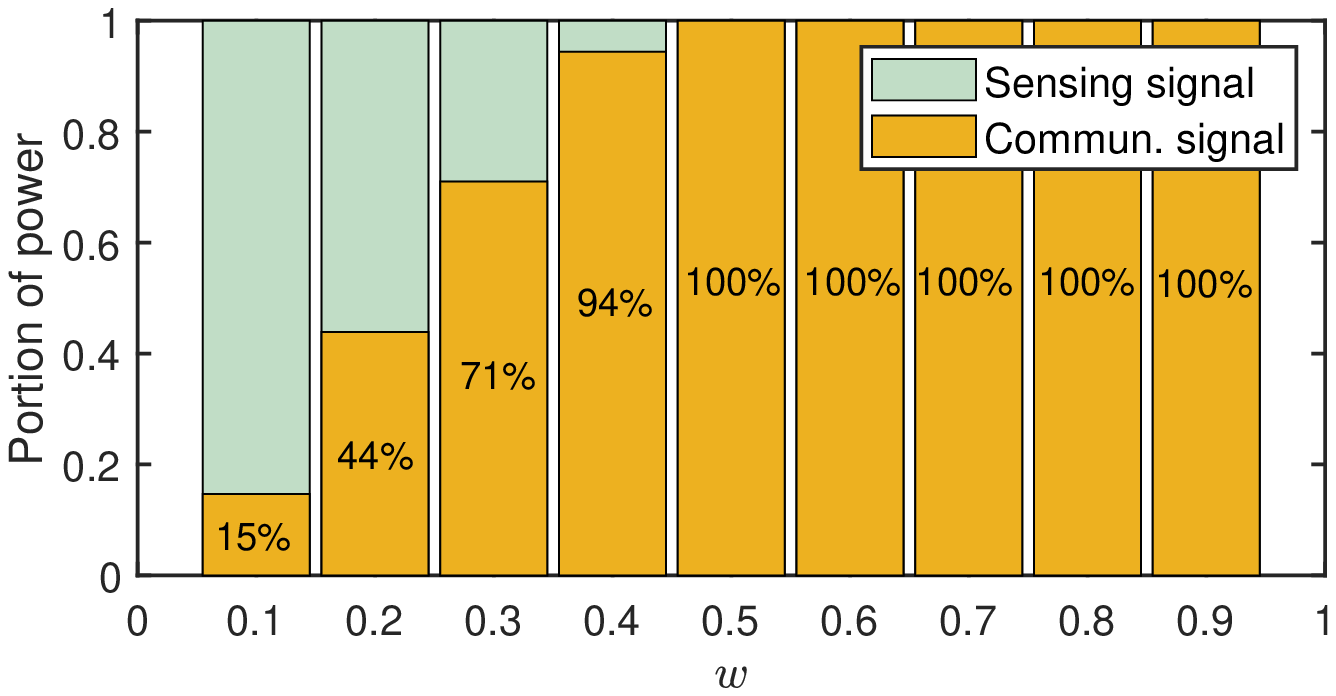}
        \caption{Wideband, Half-duplex}
    \end{subfigure}
    \begin{subfigure}[t]{0.4\textwidth}
        \centering
        \includegraphics[width=1\textwidth]{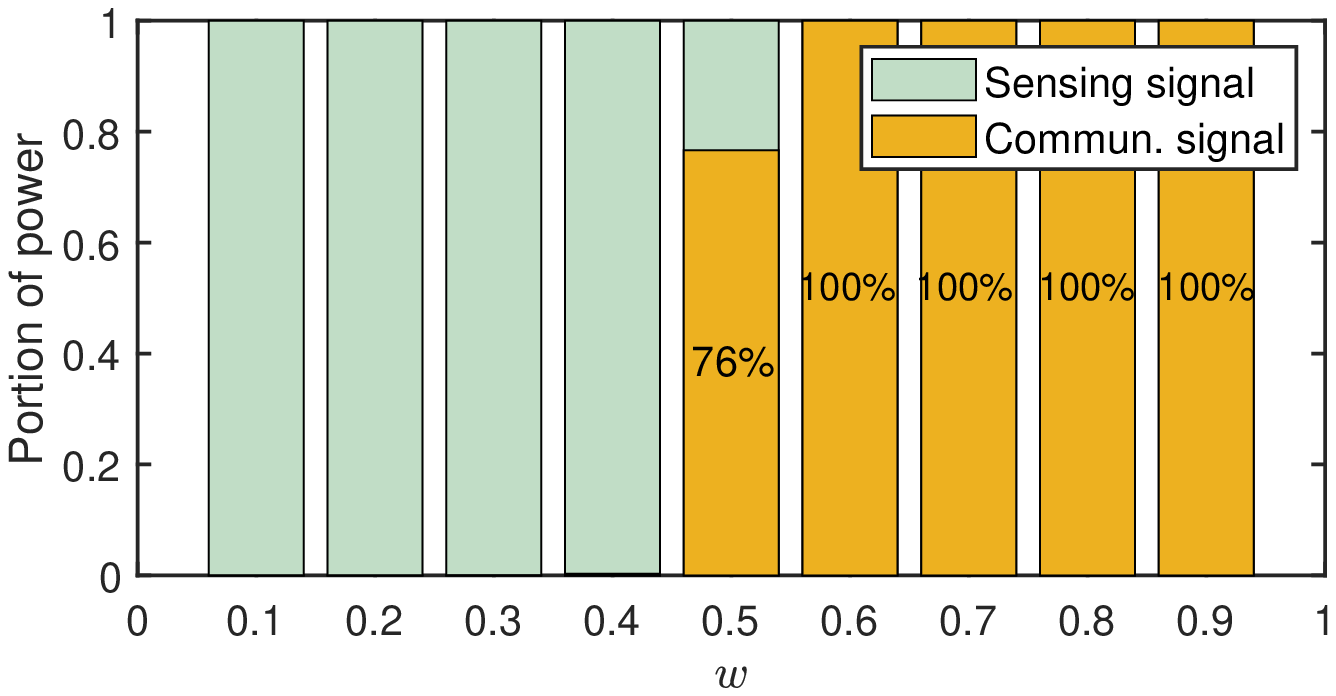}
        \caption{Wideband, Full-duplex}
    \end{subfigure}
    \caption{Power allocation of the communication and dedicated sensing signals.}
    \label{fig:power_allocation}
\end{figure*}

\subsection{Wideband System}
For the wideband bidirectional ISAC system, the system bandwidth is designated as $W = 100$ MHz, and the number of filter taps in each communication channel and SI channel is set to $L = \tilde{L} = 4$. According to \cite{tse2005fundamentals, mckay2006capacity}, we represent the first filter tap ($l = 0$) as a Rician fading channel, described by the equation:
\begin{equation}
    \mathbf{h}_{k,0} = \sqrt{ \frac{ \beta_0  }{ \beta_0 + 1} } \sigma_0 \mathbf{a}(\theta_k) + \sqrt{\frac{1}{ \beta_0 + 1}} \mathbf{h}_{k,0}^w, \forall k.
\end{equation}
Here, $\mathbf{h}_{k,0}^w \sim \mathcal{CN}(\mathbf{0}, \sigma_0^2 \mathbf{I}_M)$, where $\sigma_0$ denotes the power delay profile. The remaining filter taps are modeled as Rayleigh fading channels, denoted as $\mathbf{h}_{k,l} \sim \mathcal{CN}(\mathbf{0}, \sigma_l^2 \mathbf{I}_M) \forall  l \neq 0$, with $\sigma_l$ representing their respective power delay profiles. Additionally, these power delay profiles satisfy the constraint $\sum_{l=0}^{L-1} \sigma_l^2 = 1$ \cite{mckay2006capacity}.
In a similar fashion, the filter taps of the SI channels are modeled as Rayleigh fading channels and adhere to the same power delay profile constraint. To quantify the strength of the LOS component relative to the NLOS components distributed across all filter taps, we define the modified Rician factor $\beta$ as:
\begin{equation}
    \beta = \frac{\beta_0 \sigma_0^2 }{(\beta_0 + 1) \sum_{l=1}^{L-1} \sigma_l^2 + \sigma_0^2 }.
\end{equation}
Furthermore, in accordance with the parameters presented in Table \ref{table:parameters}, the delay of the communication signals is determined as $\tau = D/c \times W = 100$. The lengths of the PRI and CPI are thereby givn by $N_0 = 2\tau = 200$ and $2N = \Delta \times B = 10^5$, respectively.

\subsubsection{Convergence of Algorithm \ref{alg:SCA_wide}}
In Fig. \ref{fig:convergence_braod}, we investigated the convergence characteristics of the proposed \textbf{Algorithm \ref{alg:SCA_wide}} when $w = 0.5$ and $\beta = 0$ dB.
It can be observed that \textbf{Algorithm \ref{alg:SCA_wide}} exhibits a convergence behavior analogous to that of \textbf{Algorithm \ref{alg:SCA_narrow}}, with the objective value converging to a stable value within a small number of iterations.

\subsubsection{S\&C Tradeoff}
In Fig. \ref{fig:tradeoff_broad}, we present the S\&C tradeoff region achieved through the utilization of the DAM-based wideband transmission. It is discernible that the S\&C tradeoff regions attained by full-duplex and half-duplex mode exhibit a similar shape to that observed in the narrowband system.
To elaborate further, when the Rician factor is low, the half-duplex mode excels in the sensing-prior regime, while the full-duplex mode outperforms in the communication-prior regime. However, in scenarios with a very high Rician factor, the half-duplex mode emerges as the superior choice in both regimes.

Furthermore, through a comparative analysis between Fig. \ref{fig:tradeoff_narrow} and Fig. \ref{fig:tradeoff_broad}, we observe that the S\&C tradeoff in the wideband system is more prominent than in the narrowband system. In other words, in the wideband system, the performance of one function experiences a more substantial decline as the performance of the other function improves. This observation aligns with expectations, as the S\&C tradeoff in the narrowband system is primarily driven by interference between the two functions, whereas in the wideband system, it is additionally influenced by power allocation between communication signals and dedicated sensing signals. In subsequent simulations, we will further substantiate and demonstrate this phenomenon.

\subsubsection{Power Allocation}

In Fig. \ref{fig:power_allocation}, we study the power allocation between communication and dedicated sensing signals in both narrowband and wideband systems. Notably, in the narrowband system, there is a consistent absence of power allocated to dedicated sensing signals. This observation signifies that the communication signals are capable of fully providing the DoFs required for both functions, rendering the dedicated sensing signal superfluous. Conversely, in the wideband system, it is noteworthy to observe that the dedicated sensing signal consumes a substantial portion of the allocated power, particularly in the sensing-prior regime. This phenomenon can be elucidated as follows. In the wideband system, the optimization of communication signals is bound by zero-forcing constraints to ensure the absence of ISI. These zero-forcing constraints significantly curtail the available DoFs of the communication signals for the sensing function. As a result, to approach optimal sensing performance within the sensing-prior regime, a larger allocation of power to the dedicated sensing signal is required. Given that the dedicated sensing signal makes no contribution to communication, this increased allocation of power can lead to a dramatic reduction in communication performance. Therefore, as previously discussed, the S\&C tradeoff in the wideband system becomes more pronounced than that in the narrowband system.

\section{Conclusion} \label{sec:conslution}

The proposed bidirectional ISAC systems, comprising both full-duplex and half-duplex operation modes, were presented. Dedicated transmission and reception schemes were devised for both narrowband and wideband systems. Leveraging these schemes, the S\&C tradeoff was meticulously characterized through the optimization of joint S\&C signals. The numerical results have uncovered an intriguing observation. It has been revealed that, for both narrowband and wideband systems, \emph{full-duplex mode may not consistently outperform half-duplex mode within the bidirectional ISAC system.} Specifically, the full-duplex mode demonstrates superior performance in the communication-prior regime when the communication channels are not predominantly LOS, while the half-duplex mode is favored in other scenarios. As a result, practical bidirectional ISAC system design necessitates the adoption of a hybrid approach, capable of dynamic mode switching between full-duplex and half-duplex mode based on the specific operational context. Furthermore, it is pertinent to mention that this study delved into the direction-sensing task under specific assumptions. Nevertheless, the findings and conclusions presented in this paper hold potential relevance and applicability to other sensing tasks within more generalized scenarios. This assertion stems from the fundamental fact that the preservation of the sensing echo signal essentially transforms it into “unsuppressed SI” for communication systems. Hence, a comprehensive understanding of the full-duplex ISAC system necessitates further research efforts.

\vspace{-0.2cm}
\begin{appendices}

\section{Derivation of CRB in Narrowband Systems} 
The CRB can be calculated as the inverse of the Fisher information.
Define $\mathbf{u}_k = \sqrt{\rho_s} \alpha_k (\mathbf{I}_{2N} \otimes \mathbf{a}^H(\theta_k)) \mathrm{vec} (\mathbf{X}_k \mathrm{diag}( \mathbf{d}(\nu_k) ) )$.  Following the same path in \cite{kay1993fundamentals}, when other target parameters are known, the expected Fisher information for estimating $\theta_k$ from $\boldsymbol{\psi}_k$ is given by 
\begin{equation} \label{eqn:FIM_narrow} 
  J_k= \mathbb{E} \left[ 2 \mathrm{Re}\left\{ \frac{\partial \mathbf{u}_k^H}{\partial \theta_k} \mathbf{R}_{\mathbf{z}_k}^{-1} \frac{\partial \mathbf{u}_k}{\partial \theta_k}\right\} \right],
\end{equation}
where $\mathbf{R}_{\mathbf{z}_k}$ denotes the covariance matrix of $\mathbf{z}_k$ and the partial derivation is given by:
\begin{equation}
  \frac{\partial \mathbf{u}_k}{\partial \theta_k} = \alpha_k (\mathbf{I}_{2N} \otimes \dot{\mathbf{a}}^H(\theta_k)) \mathrm{vec}(\mathbf{X}_k \mathrm{diag}( \mathbf{d}(\nu_k) )),
\end{equation}
with $\dot{\mathbf{a}}(\theta_k)$ denoting the partial derivation of $\mathbf{a}(\theta_k)$ with respect to $\theta_k$ as follows:
\begin{align} \label{eqn:derivation}
  &\dot{\mathbf{a}}(\theta_k) = \mathrm{diag}(\mathbf{a}(\theta_k)) \nonumber \\
  &\times \left[0, -j \frac{2\pi}{\lambda}d \cos\theta_k,...,-j \frac{2\pi}{\lambda} (M-1) d \cos\theta_k \right]^T.
\end{align}  
To derive explicit expression of $J_k$, we first calculate the covariance matrices $\mathbf{R}_{\mathbf{X}_k} = \mathbb{E}[\mathrm{vec}(\mathbf{X}_k) \mathrm{vec}(\mathbf{X}_k)^H]$ and $\mathbf{R}_{\mathbf{z}_k} = \mathbb{E}[\mathbf{z}_k \mathbf{z}_k^H]$. Note that the cross-correlations of the signal $\mathbf{x}_{k,i}[n]$ and the noise $z_{k,i}[n]$ are equal to zero, i.e., $\mathbb{E}[\mathbf{x}_{k,i}[n] \mathbf{x}_{k, i'}^H[n'] ] = \mathbf{0}_{M \times M}$ and $\mathbb{E}[ z_{k,i}[n] z_{k,i'}^*[n'] ] = 0$ if $n \neq n'$ or $i \neq i'$. Therefore, the covariance matrices $\mathbf{R}_{\mathbf{X}_k}$ and $\mathbf{R}_{\mathbf{z}_k}$ can be calculated as follows:
\begin{align}
  &\mathbf{R}_{\mathbf{X}_k} = \begin{bmatrix}
    \mathbf{I}_{N} \otimes \mathbf{Q}_{k,1} & \mathbf{0} \\
    \mathbf{0} & \mathbf{I}_{N} \otimes \mathbf{Q}_{k,2}
  \end{bmatrix}, \\
  &\mathbf{R}_{\mathbf{z}_k} = \begin{bmatrix}
    (\rho_{\text{SI}} \Phi_{k,1} + 1)\mathbf{I}_{N} & \mathbf{0} \\
    \mathbf{0} & (\rho_{\text{SI}} \Phi_{k,2} + 1) \mathbf{I}_{N}
  \end{bmatrix}.
\end{align}
Based on the expression of $\mathbf{d}(\nu_k)$, it can be readily shown that $\mathbb{E}[ \mathrm{vec}(\mathbf{X}_k \mathrm{diag}( \mathbf{d}(\nu_k) )) \mathrm{vec}(\mathbf{X}_k \mathrm{diag}( \mathbf{d}(\nu_k) ))^H ] = \mathbf{R}_{\mathbf{X}_k}$. 
As such, the explicit expression of $J_k$ is given by
\begin{align}
  J_k = & \mathbb{E} [ 2 |\alpha_k|^2 \mathrm{Re} \{ \rho_s \mathrm{vec}(\mathbf{X}_k \mathrm{diag}( \mathbf{d}(\nu_k) ))^H (\mathbf{I}_{2N} \otimes \dot{\mathbf{a}}(\theta_k)) \mathbf{R}_{\mathbf{z}_k}^{-1} \nonumber 
  \\ & \times (\mathbf{I}_{2N} \otimes \dot{\mathbf{a}}^H(\theta_k)) \mathrm{vec}(\mathbf{X}_k \mathrm{diag}( \mathbf{d}(\nu_k) )) \} ] \nonumber \\
  = &2 |\alpha_k|^2 \rho_s \mathrm{tr} ( (\mathbf{I}_{2N} \otimes \dot{\mathbf{a}}(\theta_k)) \mathbf{R}_{\mathbf{z}_k}^{-1} (\mathbf{I}_{2N} \otimes \dot{\mathbf{a}}^H(\theta_k)) \nonumber \\
  &  \times \mathbb{E}[ \mathrm{vec}(\mathbf{X}_k \mathrm{diag}( \mathbf{d}(\nu_k) )) \mathrm{vec}(\mathbf{X}_k \mathrm{diag}( \mathbf{d}(\nu_k) ))^H ] ) \nonumber \\  
  = &2 |\alpha_k|^2 \rho_s \mathrm{tr} \left( (\mathbf{I}_{2N} \otimes \dot{\mathbf{a}}(\theta_k)) \mathbf{R}_{\mathbf{z}_k}^{-1} (\mathbf{I}_{2N} \otimes \dot{\mathbf{a}}^H(\theta_k)) \mathbf{R}_{\mathbf{X}_k} \right) \nonumber \\
  = &2 |\alpha_k|^2 \rho_s N \sum_{i =1}^2 \frac{\dot{\mathbf{a}}^H(\theta_k) \mathbf{Q}_{k,i} \dot{\mathbf{a}}(\theta_k)}{\rho_{\text{SI}} \Phi_{k,i} + 1}.
\end{align}
Then, the expected CRB is given by $J_k^{-1}$, resulting in the expression in \eqref{eqn:CRB}.

\begin{figure*}[!t]
  \normalsize
  \setcounter{equation}{54}
  \begin{subequations}
    \begin{gather}
      \label{KKT_1}
      \nabla_{\boldsymbol{\chi}} \mathcal{L} \big|_{\boldsymbol{\chi} = \boldsymbol{\chi}^{[\infty]}} = -\nabla_{\boldsymbol{\chi}} f(w, r_{k,i}, d_{k,i}) \big|_{\boldsymbol{\chi} = \boldsymbol{\chi}^{[\infty]}} - \sum_{k =1}^2 \sum_{i =1}^2 \lambda_{1, k,i} \nabla_{\boldsymbol{\chi}} ( r_{k,i} - \gamma_{k,i}^{[\infty]} ) \big|_{\boldsymbol{\chi} = \boldsymbol{\chi}^{[\infty]}} \nonumber \\
      - \sum_{k =1}^2 \sum_{i =1}^2 \lambda_{2, k,i} \nabla_{\boldsymbol{\chi}}( d_{k,i} - D_{k,i}^{[\infty]} ) \big|_{\boldsymbol{\chi} = \boldsymbol{\chi}^{[\infty]}} + \nabla_{\boldsymbol{\chi}} \mathcal{C} \big|_{\boldsymbol{\chi} = \boldsymbol{\chi}^{[\infty]}} = \mathbf{0}, \\
      \lambda_{1, k,i}( r_{k,i} - \gamma_{k,i}^{[\infty]} ) \big|_{\boldsymbol{\chi} = \boldsymbol{\chi}^{[\infty]}} = 0, \forall k, i, \\
      \lambda_{2, k,i}( d_{k,i} - D_{k,i}^{[\infty]} ) \big|_{\boldsymbol{\chi} = \boldsymbol{\chi}^{[\infty]}} = 0, \forall k, i, \\
      \lambda_{3,k} \left( \sum_{i =1}^2 \mathrm{tr}( \mathbf{Q}_{k,i} ) - 2 \right) \big|_{\boldsymbol{\chi} = \boldsymbol{\chi}^{[\infty]}} = 0, \forall k, \\
      \label{KKT_2}
      \mathrm{tr} \left( \mathbf{\Upsilon}_{1,k,i} ( \mathbf{Q}_{k,i} - \mathbf{w}_{k,i}(\mathbf{w}_{k,i})^H ) \right) \big|_{\boldsymbol{\chi} = \boldsymbol{\chi}^{[\infty]}} = 0, \forall k, i.
    \end{gather}
  \end{subequations}
  \hrulefill
  \vspace*{4pt}
\end{figure*}

\section{Proof of Proposition \ref{proposition_1}} 
Denote $\boldsymbol{\chi} = \{\mathbf{w}_{k,i}, \mathbf{Q}_{k,i}, r_{k,i}, \varpi_{k,i}, d_{k,i} \}$ as all the optimization variables of problem \eqref{problem:narrow_3} and $\boldsymbol{\chi}^{[t]} = \{ \mathbf{w}_{k,i}^{[t]}, \mathbf{Q}_{k,i}^{[t]}, r_{k,i}^{[t]}, \varpi_{k,i}^{[t]}, d_{k,i}^{[t]} \}$ as the solution obtained at the $t$-th iteration of \textbf{Algorithm \ref{alg:SCA_narrow}}. Due to the equivalence between problem \eqref{problem:narrow} and problem \eqref{problem:narrow_1}, \textbf{Proposition \ref{proposition_1}} holds if the converged solution $\boldsymbol{\chi}^{[\infty]}$ of problem \eqref{problem:narrow_3} as $t \rightarrow \infty$ satisfies the KKT conditions of problem \eqref{problem:narrow_1}. To prove this, we first give the following results:
\begin{subequations}
  \begin{gather}
    \label{eqn:proof_equality_1}
    \gamma_{k,i} \big|_{\boldsymbol{\chi} = \boldsymbol{\chi}^{[\infty]}} = \gamma_{k,i}^{[\infty]} \big|_{\boldsymbol{\chi} = \boldsymbol{\chi}^{[\infty]}}, \\
    \label{eqn:proof_equality_1.1}
    D_{k,i} \big|_{\boldsymbol{\chi} = \boldsymbol{\chi}^{[\infty]}} = D_{k,i}^{[\infty]} \big|_{\boldsymbol{\chi} = \boldsymbol{\chi}^{[\infty]}}, \\
    \label{eqn:proof_equality_2}
    \nabla_{\boldsymbol{\chi}} \gamma_{k,i} \big|_{\boldsymbol{\chi} = \boldsymbol{\chi}^{[\infty]}} = \nabla_{\boldsymbol{\chi}} \gamma_{k,i}^{[\infty]} \big|_{\boldsymbol{\chi} = \boldsymbol{\chi}^{[\infty]}}, \\
    \label{eqn:proof_equality_3}
    \nabla_{\boldsymbol{\chi}} D_{k,i} \big|_{\boldsymbol{\chi} = \boldsymbol{\chi}^{[\infty]}} = \nabla_{\boldsymbol{\chi}} D_{k,i}^{[\infty]} \big|_{\boldsymbol{\chi} = \boldsymbol{\chi}^{[\infty]}}. 
  \end{gather}
\end{subequations}
Specifically, \eqref{eqn:proof_equality_1} and \eqref{eqn:proof_equality_1.1} can be directly obatined by substituting $\boldsymbol{\chi} = \boldsymbol{\chi}^{[\infty]}$ into $\gamma_{k,i}$, $\gamma_{k,i}^{[\infty]}$, $D_{k,i}$, and $D_{k,i}^{[\infty]}$, respectively. Since $\gamma_{k,i}$ and $\gamma_{k,i}^{[\infty]}$ are only related to the optimization variables $\mathbf{w}_{k',m}$ and $\mathbf{Q}_{k,i}$, \eqref{eqn:proof_equality_2} can be obtained based on the following equations:
\begin{subequations}
  \begin{align}
    \nabla_{\mathbf{w}_{k',m}} & \gamma_{k,i}^{[\infty]} \big|_{\mathbf{w}_{k',m} = \mathbf{w}_{k',m}^{[\infty]}} \nonumber \\ 
    =&    \frac{2 \rho_c \mathrm{Re} \left\{ \mathbf{h}_{k'}^H \mathbf{w}_{k',m}^{[\infty]}\mathbf{h}_{k'}  \right\}}{ I_{k,i}^{[\infty]} }    = \nabla_{\mathbf{w}_{k',m}} \gamma_{k,i} \big|_{\mathbf{w}_{k',m} = \mathbf{w}_{k',m}^{[\infty]}}, \\
    \nabla_{\mathbf{Q}_{k,i}} &\gamma_{k,i}^{[\infty]} \big|_{\mathbf{Q}_{k,i}  = \mathbf{Q}_{k,i}^{[\infty]}} \nonumber \\ \
    = &    -\rho_c \left| \frac{\mathbf{h}_{k'}^H \mathbf{w}_{k',m}^{[\infty]}}{I_{k,i}^{[\infty]}} \right|^2 (\rho_s \mathbf{a}(\theta_k) \mathbf{a}^H(\theta_k) + \rho_{\text{SI}} \eta \mathbf{g}_k \mathbf{g}_k^H ) \nonumber \\
    = &\nabla_{\mathbf{Q}_{k,i}} \gamma_{k,i} \big|_{\mathbf{Q}_{k,i} = \mathbf{Q}_{k,i}^{[\infty]}}.
  \end{align}
\end{subequations}
Following a similar path, \eqref{eqn:proof_equality_3} can also be obtained. Now, we are ready to prove \textbf{Proposition \ref{proposition_1}}. We take full-duplex as an example, where the Lagrangian function of problem \eqref{problem:narrow_3} as $t \rightarrow \infty$ is given by:
\begin{align}
  \mathcal{L} = &- f(w, r_{k,i}, d_{k,i}) - \sum_{k =1}^2 \sum_{i =1}^2 \lambda_{1, k,i}( r_{k,i}- \gamma_{k,i}^{[\infty]} ) \nonumber \\
   &- \sum_{k =1}^2 \sum_{i =1}^2 \lambda_{2, k,i}( d_{k,i} - D_{k,i}^{[\infty]} ) + \mathcal{C},
\end{align} 
where 
\begin{align}
  \mathcal{C} = &\sum_{k =1}^2 \lambda_{3,k} \left( \sum_{i =1}^2 \mathrm{tr}( \mathbf{Q}_{k,i} ) - 2 \right) \nonumber \\ &- \sum_{k =1}^2 \sum_{i =1}^2 \mathrm{tr} \left( \mathbf{\Upsilon}_{1,k,i} ( \mathbf{Q}_{k,i} - \mathbf{w}_{k,i}\mathbf{w}_{k,i}^H ) \right) \nonumber\\
  &+ \sum_{k =1}^2 \mathrm{tr}\left( \mathbf{\Upsilon}_{2,k}( \mathbf{Q}_{k,1} - \mathbf{Q}_{k,2} ) \right),
\end{align}
and $\boldsymbol{\lambda} = \{ \lambda_{1, k,i}, \lambda_{2, k,i}, \lambda_{3, k} \}$ and $\mathbf{\Upsilon} = \{ \mathbf{\Upsilon}_{1,k,i}, \mathbf{\Upsilon}_{2,k} \}$ denote the Lagrangian dual variables. Since problem \eqref{problem:narrow_3} is convex, its globally optimal solution $\boldsymbol{\chi}^{[\infty]}$ as $t \rightarrow \infty$ must satisfy the KKT conditions of it. Thus, there must exist $\boldsymbol{\lambda}$ and $\mathbf{\Upsilon}$, such that \eqref{KKT_1} on the top of the next page hold. 
By substituting \eqref{eqn:proof_equality_1}-\eqref{eqn:proof_equality_3} into \eqref{KKT_1}, it can be readily verified that \eqref{KKT_1}-\eqref{KKT_2} are exactly the KKT conditions of problem \eqref{problem:narrow_1}. In other words,  $\boldsymbol{\chi}^{[\infty]}$ as $t \rightarrow \infty$ satisfies the KKT conditions of problem \eqref{problem:narrow_1} in full-duplex. The proof corresponding to the half-duplex can also be proved following the same path. The proof is thus completed. 

\section{Derivation of FIM in Wideband Systems} 
We first calculate the covariance matrices $\mathbf{R}_{\mathbf{X}_k} = \mathbb{E}[\mathrm{vec}(\mathbf{X}_k\mathrm{diag}( \mathbf{d}(\nu_k) ) ) \mathrm{vec}(\mathbf{X}_k \mathrm{diag}( \mathbf{d}(\nu_k) ))^H]$ and $\mathbf{R}_{\mathbf{z}} = \mathbb{E}[\mathbf{z} \mathbf{z}^H]$. In contrast to the case of narrowband systems in Appendix A, the cross-correlation of the transmit signal with DAM for the wideband transmission, i.e., $\mathbb{E}[\mathbf{x}_{k,i}^{(q)}[n] (\mathbf{x}_{k,i}^{(q)})^H[n']],$ $\forall n \neq n'$, may not equal to zero due to the delay pre-compensation. Hence, the covariance matrix $\mathbf{R}_{\mathbf{X}_k}$ can be expressed as 
\begin{equation}
  \mathbf{R}_{\mathbf{X}_k} = \mathbf{R}_{\mathbf{X}_k}^{auto} + \mathbf{R}_{\mathbf{X}_k}^{cross},
\end{equation} 
where $\mathbf{R}_{\mathbf{X}_k}^{auto}$ and $\mathbf{R}_{\mathbf{X}_k}^{cross}$ denote the autocorrelation and cross-correlation components of the covariance matrix $\mathbf{R}_{\mathbf{X}_k}$, respectively. In particular, the autocorrelation component $\mathbf{R}_{\mathbf{X}_k}^{auto}$ is
\begin{equation}
  \mathbf{R}_{\mathbf{X}_k}^{auto} =  \mathbf{I}_U \otimes \begin{bmatrix}
    \mathbf{I}_{N_0 /2} \otimes \mathbf{Q}_{k,1} & \mathbf{0} \\
    \mathbf{0} & \mathbf{I}_{N_0 /2} \otimes \mathbf{Q}_{k,2}
  \end{bmatrix},
\end{equation} 
and the cross-correlation component $\mathbf{R}_{\mathbf{X}_k}^{cross}$ satisfies that 
\begin{equation} \label{eqn:blkdiag_1}
  \mathrm{blkdiag}_{M}(\mathbf{R}_{\mathbf{X}_k}^{cross}) = \mathbf{0},
\end{equation}  
where $\mathrm{blkdiag}_{M}(\mathbf{R}_{\mathbf{X}_k}^{cross})$ denotes all the $M \times M$ blocks on the diagonal of matrix $\mathbf{R}_{\mathbf{X}_k}^{cross}$. However, since each element of $\mathbf{z}_k$ is i.i.d. distributed, the cross-correlation component of the covariance of $\mathbf{R}_{\mathbf{z}_k}$ is still zero. Thus, $\mathbf{R}_{\mathbf{z}_k}$ is given by 
\begin{equation} \label{eqn:blkdiag_2}
  \mathbf{R}_{\mathbf{z}_k} = \mathbf{I}_U \otimes \begin{bmatrix}
    (\rho_{\text{SI}} \Phi_{k,1} + 1)\mathbf{I}_{N_0 /2} & \mathbf{0} \\
    \mathbf{0} & (\rho_{\text{SI}} \Phi_{k,2} + 1)\mathbf{I}_{N_0 /2} 
  \end{bmatrix}.
\end{equation}
Then, based on the results in Appendix A, we derive $J_k$ as follows
\begin{align}
  J_k = &2 |\alpha_k|^2 \rho_s  \mathrm{tr}\left( (\mathbf{I}_{2N} \otimes \dot{\mathbf{a}}(\theta_k))  \mathbf{R}_{\mathbf{z}_k}^{-1} (\mathbf{I}_{2N} \otimes \dot{\mathbf{a}}^H(\theta_k)) \mathbf{R}_{\mathbf{X}_k} \right) \nonumber \\
  \overset{(a)}{=} &2 |\alpha_k|^2 \rho_s \mathrm{tr}\left( (\mathbf{I}_{2N} \otimes \dot{\mathbf{a}}(\theta_k))  \mathbf{R}_{\mathbf{z}_k}^{-1} (\mathbf{I}_{2N} \otimes \dot{\mathbf{a}}^H(\theta_k)) \mathbf{R}_{\mathbf{X}_k}^{auto} \right) \nonumber \\
  = &2 |\alpha_k|^2 \rho_s N \sum_{i =1}^2 \frac{\dot{\mathbf{a}}^H(\theta_k) \mathbf{Q}_{k,i} \dot{\mathbf{a}}(\theta_k)}{\rho_{\text{SI}} \Phi_{k,i} + 1}. 
\end{align}
The equality $(a)$ stems from \eqref{eqn:blkdiag_1}, which results in the following equality:
\begin{equation}
  \mathrm{tr} \left( (\mathbf{I}_{2N} \otimes \dot{\mathbf{a}}(\theta_k))  \mathbf{R}_{\mathbf{z}_k}^{-1} (\mathbf{I}_{2N} \otimes \dot{\mathbf{a}}^H(\theta_k)) \mathbf{R}_{\mathbf{X}_k}^{cross} \right) = 0.
\end{equation}
Then, the expression of the expected CRB in \eqref{CRB_wideband} can be obtained.

\end{appendices}

\bibliographystyle{IEEEtran}
\bibliography{reference/mybib}

\end{document}